\documentclass[envcountsame]{llncs}

\usepackage{amssymb}
\usepackage{amsmath}
\usepackage{pgf}
\usepackage{tikz}
\usepackage{fancyhdr}
\usetikzlibrary{automata}
\usetikzlibrary{shapes,snakes}
\usetikzlibrary{positioning}
\usetikzlibrary{arrows,automata}
\usetikzlibrary{positioning}
\usetikzlibrary{arrows,shapes,fit,backgrounds}



\newcommand{\sep}{\ \Big|\ }

\newcommand{\X}{{\ensuremath{\mathbf{X}}}}
\newcommand{\F}{{\ensuremath{\mathbf{F}}}}
\newcommand{\G}{{\ensuremath{\mathbf{G}}}}
\newcommand{\U}{{\ensuremath{\mathbf{U}}}}
\newcommand{\true}{{\ensuremath{\mathbf{tt}}}}
\newcommand{\false}{{\ensuremath{\mathbf{ff}}}}
\newcommand{\tran}[1]{\stackrel{#1}{\longrightarrow}}
\DeclareMathOperator{\Inf}{Inf}
\newcommand{\I}{\mathcal I}
\newcommand{\nxt}{{\ensuremath{\mathrm{next}}}}
\newcommand{\B}{B}
\newcommand{\elemB}{\mathit{wf}}
\newcommand{\product}{\overline{\mathcal M}}
\newcommand{\pred}{\heartsuit}
\newcommand{\lift}{\mathrm{Lift}}

\renewcommand{\L}{L}
\newcommand{\D}{R}
\newcommand{\GR}{GR}


\tikzset{
    state/.style={
		rectangle,
            rounded corners,
            draw=black,
            minimum height=2em,
            minimum width=2em,
            inner sep=4pt,
            text centered,
            }
}


\newcommand{\myspace}{\vspace{-1em}}
\newcommand{\myspaceb}{\vspace{-.5em}}
\newcommand{\myspacec}{\vspace{-.3em}}

\title{Automata with Generalized Rabin Pairs for 
\\
Probabilistic Model Checking and LTL Synthesis\myspaceb}

\author{\myspace Krishnendu Chatterjee\inst{1}\thanks{The author is supported by Austrian Science Fund (FWF) Grant No P 23499-N23,  FWF NFN Grant No S11407-N23 (RiSE), ERC Start grant (279307: Graph Games), and Microsoft faculty fellows award.} \and
Andreas Gaiser\inst{2}\thanks{The author is supported by the DFG Graduiertenkolleg 1480 (PUMA).} \and
Jan K\v ret\'insk\'y\inst{2,3}\thanks{The author is supported by the Czech Science Foundation, project No. P202/10/1469}}

\institute{
IST Austria \and
Fakult\"at f\"ur Informatik, Technische Universit\"at M\"unchen, Germany \and 
Faculty of Informatics, Masaryk University, Brno, Czech Republic}

\begin{document}
\pagestyle{plain}

\maketitle

 \renewcommand{\thefootnote}{\fnsymbol{footnote}}


\myspace\myspace

\begin{abstract}
The model-checking problem for probabilistic systems crucially relies on 
the translation of LTL to deterministic Rabin automata (DRW). 
Our recent Safraless translation~\cite{cav,atva} for the LTL(\F,\G) fragment 
produces smaller automata as compared to the traditional approach.
In this work, instead of DRW we consider deterministic automata with 
acceptance condition given as disjunction of generalized Rabin pairs (DGRW).
The Safraless translation of LTL(\F,\G) formulas to DGRW results in smaller 
automata as compared to DRW.
We present algorithms for probabilistic model-checking as well as game solving
for DGRW conditions.
Our new algorithms lead to improvement both in terms of theoretical bounds 
as well as practical evaluation.
We compare PRISM with and without our new translation, and show that the new
translation leads to significant improvements. 
\end{abstract}

\myspace\myspace\myspace

\section{Introduction}

\myspace

\noindent{\em Logic for $\omega$-regular properties.}
The class of $\omega$-regular languages generalizes regular languages
to infinite strings and provides a robust specification language to 
express all properties used in verification and synthesis.
The most convenient way to describe specifications is through logic,
as logics provide a concise and intuitive formalism to express
properties with very precise semantics. 
The linear-time temporal logic (LTL)~\cite{DBLP:conf/focs/Pnueli77} 
is the de-facto logic to express linear time $\omega$-regular properties 
in verification and synthesis.

\smallskip\noindent{\em Deterministic $\omega$-automata.}
For model-checking purposes, LTL formulas can be converted to 
nondeterministic B\"uchi automata (NBW)~\cite{DBLP:journals/jcss/VW86}, and then the problem 
reduces to checking emptiness of the intersection of two NBWs 
(representing the system and the negation of the specification, respectively).
However, for two very important problems deterministic automata 
are used, namely, (1)~the synthesis problem~\cite{Church62,DBLP:conf/icalp/PnueliR89}; and 
(2)~the model-checking problem for probabilistic systems or 
Markov decision processes (MDPs)~\cite{DBLP:books/daglib/0020348} which has
a wide range of applications from randomized communication, to security 
protocols, to biological systems.
The standard approach is to translate LTL to NBW~\cite{DBLP:journals/jcss/VW86}, and then 
convert the NBW to a deterministic automata with Rabin acceptance 
condition (DRW) using Safra's determinization procedure~\cite{DBLP:conf/focs/Safra88} (or using 
a recent improvement of Piterman~\cite{DBLP:conf/lics/Piterman06}).

\smallskip\noindent{\em Avoiding Safra's construction.} 
The key bottleneck of the standard approach in practice is Safra's 
determinization procedure which is difficult to implement due to 
the complicated state space and data structures associated with the
construction~\cite{DBLP:conf/sofsem/Kupferman12}.
As a consequence several alternative approaches have been proposed, and
the most prominent ones are as follows.
The first approach is the Safraless approach. One can reduce the synthesis 
problem to emptiness of nondeterministic B\"uchi tree automata~\cite{DBLP:conf/focs/KupfermanV05};
it has been implemented with considerable success in~\cite{DBLP:conf/fmcad/JobstmannB06}. For probabilistic model checking other constructions can be also used, however, all of them are exponential~\cite{DBLP:conf/focs/Vardi85,DBLP:journals/jacm/CourcoubetisY95}.
The second approach is to use heuristic to improve Safra's determinization
procedure~\cite{DBLP:journals/tcs/KleinB06,DBLP:conf/wia/KleinB07} which has led to the tool ltl2dstar~\cite{ltl2dstar}.
The third approach is to consider fragments of LTL. 
In~\cite{DBLP:journals/tocl/AlurT04} several simple fragments of LTL were proposed that allow
much simpler (single exponential as compared to the general double exponential)
translations to deterministic automata.
The generalized reactivity(1) fragment of LTL (called GR(1))  was introduced 
in~\cite{DBLP:conf/vmcai/PitermanPS06} and a cubic time symbolic representation of an equivalent automaton was 
presented. The approach has been implemented in the ANZU tool~\cite{DBLP:conf/cav/JobstmannGWB07}.
Recently, the $(\F,\G)$-fragment of LTL, that uses boolean operations and 
only $\F$ (eventually or in future) and $\G$ (always or globally) as temporal operators, was considered
and a simple and direct translation to deterministic Rabin automata (DRW) was 
presented~\cite{cav}. Not only it covers all fragments of~\cite{DBLP:journals/tocl/AlurT04}, but it can also express all complex fairness constraints, which are widely used in verification.

\smallskip\noindent{\em Probabilistic model-checking.} 
Despite several approaches to avoid Safra's determinization, for probabilistic 
model-checking the deterministic automata are still necessary.
Since probabilistic model-checkers handle linear arithmetic, they do not benefit 
from the symbolic methods of~\cite{DBLP:conf/vmcai/PitermanPS06,DBLP:conf/vmcai/MorgensternS08} or from the tree automata approach.
The approach for probabilistic model-checking has been to explicitly 
construct a DRW from the LTL formula. 
The most prominent probabilistic model-checker PRISM~\cite{prism} implements
the ltl2dstar approach.

\smallskip\noindent{\em Our results.} In this work, we focus on the $(\F,\G)$-fragment
of LTL.
Instead of the traditional approach of translation to DRW we propose a translation 
to deterministic automata with \emph{generalized Rabin pairs}.
We present probabilistic model-checking as well as symbolic game solving algorithms
for the new class of conditions which lead to both theoretical as well as 
significant practical improvements.
The details of our contributions are as follows.

\myspacec

\begin{enumerate}

\item A Rabin pair consists of the conjunction of a B\"uchi (always eventually) 
and a coB\"uchi (eventually always) condition, and a Rabin condition is a 
disjunction of Rabin pairs. 
A generalized Rabin pair is the conjunction of conjunctions of B\"uchi 
conditions and conjunctions of coB\"uchi conditions.
However, as conjunctions of coB\"uchi conditions is again a coB\"uchi condition,
a generalized Rabin pair is the conjunction of a coB\"uchi condition and 
conjunction of B\"uchi conditions.\footnote{Note that our condition (disjunction of generalized Rabin pairs) is very different from both generalized Rabin conditions (conjunction of Rabin conditions) and 
the generalized Rabin(1) condition of~\cite{DBLP:conf/nfm/Ehlers11}, which considers a set of assumptions and guarantees where each assumption and guarantee consists of \emph{one} Rabin pair.
Syntactically, disjunction of generalized Rabin pairs condition is 
$\bigvee_{i}( \F\G a_i \wedge \bigwedge_j \G\F b_{ij})$,
whereas generalized Rabin condition is $\bigwedge_j (\bigvee_i (\F\G a_{ij} \wedge \G\F b_{ij}))$,
and generalized Rabin(1) condition is $(\bigwedge_i (\F\G a_i \wedge \G\F b_i) \Rightarrow \bigwedge_j (\F\G a_j \wedge \G\F b_j))$.}
We consider deterministic automata where the acceptance condition is a disjunction of 
generalized Rabin pairs (and call them DGRW). 
The $(\F,\G)$-fragment of LTL admits a direct and algorithmically simple translation to 
DGRW~\cite{cav} and we consider DGRW for probabilistic model-checking and synthesis.
The direct translation of LTL(\F,\G) could be done to a compact deterministic 
automaton with a Muller condition, however, the explicit representation of 
the Muller condition is typically huge and not algorithmically efficient, 
and thus reduction to deterministic Rabin automata was performed (with a 
blow-up) since Rabin conditions admit efficient algorithmic analysis. 
We show that DGRW allow both for a very compact translation of the $(\F,\G)$-fragment 
of LTL as well as efficient algorithmic analysis.
The direct translation of LTL(\F,\G) to DGRW has the same number of states as 
for a general Muller condition.
For many formulae expressing e.g.~fairness-like conditions the translation to DGRW is 
significantly more compact than the previous ltl2dstar approach. 
For example, for a conjunction of three strong fairness constraints,
ltl2dstar produces a DRW with more than a million states, translation to DRW via DGRW requires 469 states, 
and the corresponding DGRW has only 64 states.

\item One approach for probabilistic model-checking and synthesis for DGRW 
would be to first convert them to DRW, and then use the standard algorithms. 
Instead we present direct algorithms for DGRW that avoids the translation to 
DRW both for probabilistic model-checking and game solving.
The direct algorithms lead to both theoretical and practical improvements.
For example, consider the disjunctions of $k$ generalized Rabin pairs 
such that in each pair there is a conjunction of a coB\"uchi condition and 
conjunctions of $j$ B\"uchi conditions.
Our direct algorithms for probabilistic model-checking as well as game solving 
is more efficient by a multiplicative factor of $j^k$ and $j^{k^2+k}$ as compared to the 
approach of translation to DRW for probabilistic model checking and game solving,
respectively.
Moreover, we also present symbolic algorithms for game solving for DGRW 
conditions. 

\item We have implemented our approach for probabilistic model checking in PRISM,
and the experimental results show that as compared to the existing implementation
of PRISM with ltl2dstar our approach results in improvement of order of magnitude.
Moreover, the results for games confirm that the speed up is even greater than for probabilistic model checking.

\end{enumerate}


\myspace\myspace\myspaceb

\section{Preliminaries}

\myspace

In this section, we recall the notion of linear temporal logic (LTL) and illustrate the recent translation of its (\F,\G)-fragment to DRW~\cite{cav,atva} through the intermediate formalism of DGRW. Finally, we define an index that is important for characterizing the savings the new formalism of DGRW brings as shown in the subsequent sections.

\myspace

\subsection{Linear temporal logic}

We start by recalling the fragment of linear temporal logic with \emph{future} (\F) and \emph{globally} (\G) modalities.

\begin{definition}[LTL(\F,\G) syntax]\label{def:ltl-syn}
The formulae of the \emph{(\F,\G)-fragment of linear temporal logic}
are given by the following syntax:
\myspaceb

$$\varphi::= a\mid \neg a\mid \varphi\wedge\varphi \mid \varphi\vee\varphi \mid \F\varphi \mid \G\varphi$$
\myspaceb

 \noindent where $a$ ranges over a finite fixed set $Ap$ of atomic propositions.
\end{definition}
We use the standard abbreviations $\true:=a\vee\neg a$ and $\false:=a\wedge\neg a$.
Note that we use the negation normal form, as negations can be pushed inside to atomic propositions due to the equivalence of $\F\varphi$ and $\neg\G\neg\varphi$.

\begin{definition}[LTL(\F,\G) semantics]
Let $w\in (2^{Ap})^\omega$ be a word. The $i$th letter of $w$ is denoted $w[i]$, i.e.~$w=w[0]w[1]\cdots$. 
Further, we define the $i$th suffix of $w$ as $w_i =w[i]w[i+1]\cdots$. The semantics of a formula on $w$ is then defined inductively as follows: $w \models a \iff a \in w[0]$; $w \models \neg a \iff a \notin w[0]$; $w\models \varphi \wedge \psi \iff w \models \varphi \text{ and }  w \models \psi$; $w \models \varphi \vee \psi \iff w \models \varphi \text{ or }  w \models \psi$; and
\myspace

{\allowdisplaybreaks
\begin{align*}
  w &\models \F \varphi &&\iff \exists \, k\in\mathbb N_0: w_k \models
  \varphi\\
  w &\models \G \varphi &&\iff \forall \, k\in\mathbb N_0: w_k \models
  \varphi
 \end{align*}}
\end{definition}
\myspace\myspace

\subsection{Translating LTL(\F,\G) into deterministic $\omega$-automata}

Recently, in~\cite{cav,atva}, a new translation of LTL(\F,\G) to deterministic automata has been proposed. This construction avoids Safra's determinization and makes direct use of the structure of the formula. We illustrate the construction in the following examples.

\begin{example}\label{ex:trans-fin}
Consider a formula $\F a\,\vee\,\G b$. The construction results in the following automaton. The state space of the automaton has two components. The first component stores the current formula to be satisfied. Whenever a letter is read, the formula is updated accordingly. For example, when reading a letter with no $b$, the option to satisfy the formula due to satisfaction of $\G b$ is lost and is thus reflected in changing the current formula to $\F a$ only. 

\hspace*{-2em}
\begin{tikzpicture}[x=4cm,y=2cm,font=\footnotesize,yscale=0.7]
\node[state,initial] (i) at (0,0) {$\F a\vee\G b \sep \big\{\{b\}\big\}$};
\node[state] (ii) at (0.8,0) {$\F a \sep \big\{\emptyset,\{b\}\big\}$};
\node[state] (iii) at (2,0) {$\true \sep \big\{\emptyset,\{a\},\{b\},\{a,b\}\big\}$};

\path[->] (i) edge[loop above] node[above]{$\{b\}$} (i)
          (i) edge node[above]{$\emptyset$} (ii)
          (i) edge[bend right] node[above]{$\{a\},\{a,b\}$} (iii)
          (ii) edge[loop above] node[above]{$\emptyset,\{b\}$} (ii)
          (ii) edge node[above]{$\{a\},\{a,b\}$} (iii)
          (iii) edge[loop above] node[above]{$\emptyset,\{a\},\{b\},\{a,b\}$} (iii);
\end{tikzpicture}

The second component stores the last letter read (actually, an equivalence class thereof). The purpose of this component is explained in the next example. For formulae with no mutual nesting of $\F$ and $\G$ this component is redundant.

The formula $\F a\,\vee\,\G b$ is satisfied either due to $\F a$ or $\G b$. Therefore, when viewed as a Rabin automaton, there are two Rabin pairs. One forcing infinitely many visits of the third state ($a$ in $\F a$ must be eventually satisfied) and the other prohibiting infinitely many visits of the second and third states ($b$ in $\G b$ must never be violated). The acceptance condition is a disjunction of these pairs.
\end{example}

\myspaceb

\begin{example}\label{ex:trans-infin}
Consider now the formula $\varphi=\G\F a\wedge \G\F\neg a$. Satisfaction of this formula does not depend on any finite prefix of the word and reading $\{a\}$ or $\emptyset$ does not change the first component of the state. This infinitary behaviour requires the state space to record which letters have been seen infinitely often and the acceptance condition to deal with that. In this case, satisfaction requires visiting the second 
state infinitely often \emph{and} visiting the first state infinitely often.\myspaceb

\begin{tikzpicture}[x=3cm,y=1.8cm,font=\footnotesize]
\node[state] (o1) at (0.1,-0.5) {$\varphi\sep\big\{\emptyset\big\}$};
\node[state] (a1) at (1.1,-0.5) {$\varphi\sep \big\{\{a\}\big\}$};
\node (x) at (1.3,-1.7) {};

\path[->] 
(o1) edge[loop above] node[above]{$\emptyset$} (o1)
    edge[bend left] node[above]{$\{a\}$} (a1)
(a1) edge[loop above] node[above]{$\{a\}$} (a1)
    edge[bend left] node[below]{$\emptyset$} (o1)
;
\node[state] (o) at (2,0) {$\varphi\sep\big\{\emptyset\big\}\sep 1$};
\node[state] (a) at (3,0) {$\varphi\sep \big\{\{a\}\big\}\sep 1$};
\node[state] (o2) at (2,-1) {$\varphi\sep\big\{\emptyset\big\}\sep 2$};
\node[state] (a2) at (3,-1) {$\varphi\sep \big\{\{a\}\big\}\sep 2$};

\path[->] 
(o) edge[loop left,max distance=8mm,in=190,out=170,looseness=10] node[left]{$\emptyset$} (o)
    edge node[above]{$\{a\}$} (a)
(a) edge node[right]{$\{a\}$} (a2)
(a.south) edge node[below]{$\emptyset$} (o2)
;
\path[->] 
(o2) edge node[left]{$\emptyset$} (o)
(o2.north) edge node[above]{$\{a\}$} (a)
(a2) edge[loop right, max distance=8mm,in=10,out=-10,looseness=10] node[right]{$\{a\}$} (a2)
    edge node[below]{$\emptyset$} (o2)
;
\end{tikzpicture}
\myspace\myspace\myspace\myspace

However, such a conjunction cannot be written as a Rabin condition. In order to get a Rabin automaton, we would duplicate the state space. In the first copy, we wait for reading $\{a\}$. Once this happens we move to the second copy, where we wait for reading $\emptyset$. Once we succeed we move back to the first copy and start again. This bigger automaton now allows for a Rabin condition. Indeed, it is sufficient to infinitely often visit the ``successful'' state of the last copy as this forces infinite visits of ``successful'' states of all copies.
\end{example}

In order to obtain a DRW from an LTL formula, \cite{cav,atva} first constructs an automaton similar to DGRW (like the one on the left) and then the state space is blown-up and a DRW (like the one on the right) is obtained. However, we shall argue that this blow-up is unnecessary for application in probabilistic model checking and in synthesis. This will result in much more efficient algorithms for complex formulae. In order to avoid the blow-up we define and use DGRW, an automaton with more complex acceptance condition, yet as we show algorithmically easy to work with and efficient as opposed to e.g.\ the general Muller condition.

\myspace

\subsection{Automata with generalized Rabin pairs}\label{ssec:dgrw}

\myspacec

In the previous example, the cause of the blow-up was the conjunction of Rabin conditions. In~\cite{cav}, a generalized version of Rabin condition is defined that allows for capturing conjunction. It is defined as a positive Boolean combination of Rabin pairs. Whether a set $\Inf(\rho)$ of states visited infinitely often on a run $\rho$ is accepting or not is then defined inductively as follows:

\myspace
\begin{align*}
  \Inf(\rho) &\models \varphi \wedge \psi &&\iff \Inf(\rho) \models \varphi \text{ and }
  \Inf(\rho) \models \psi\\
  \Inf(\rho) &\models \varphi \vee \psi &&\iff \Inf(\rho) \models \varphi \text{ or }
  \Inf(\rho) \models \psi\\
  \Inf(\rho) &\models(F,I) &&\iff F\cap \Inf(\rho)=\emptyset \text{ and } I\cap \Inf(\rho)\neq\emptyset 
\end{align*}
Denoting $Q$ as the set of all states, $(F,I)$ is then equivalent to $(F,Q)\wedge(\emptyset,I)$. Further, $(F_1,Q)\wedge(F_2,Q)$ is equivalent to $(F_1\cup F_2,Q)$. Therefore, one can transform any such condition into a disjunctive normal form 
and obtain a condition of the following form:\myspace
\begin{align*}
\bigvee_{i=1}^k \left( \Big(F_i,Q\Big) \wedge\bigwedge_{j=1}^{\ell_i}\Big( \emptyset, I_i^j \Big) \right) \tag{$*$}
\end{align*}\myspace

Therefore, in this paper we define the following new class of $\omega$-automata:

\begin{definition}[DGRW]
An \emph{automaton with generalized Rabin pairs (DGRW)} is a (deterministic) $\omega$-automaton $\mathcal A=(Q,q_0,\delta)$ over an alphabet $\Sigma$, where $Q$ is a set of states, $q_0$ is the initial state, $\delta:Q\times\Sigma\to Q$ is a transition function, together with a \emph{generalized Rabin pairs  (GRP) acceptance condition} $\mathcal{GR}\subseteq 2^{2^Q\times 2^{2^Q}}$. A run $\rho$ of $\mathcal A$ is accepting for $\mathcal{GR}=\big\{\big(F_i, \{I_i^1, \ldots, I_i^{\ell_i}\}\big) \sep i\in\{1,\ldots,k\}\big\}$ if there is $i\in\{1,\ldots,k\}$ such that
\begin{align*}
F_i\cap \Inf(\rho)&=\emptyset \text{ and }\\
I_i^j\cap \Inf(\rho)&\neq\emptyset \text{ for every } j\in\{1,\ldots,\ell_i\}
\end{align*}
\myspace

Each $(F_i,\I_i)= \Big(F_i, \{I_i^1, \ldots, I_i^{\ell_i} \}\Big)$ is called a \emph{generalized Rabin pair (GRP)}, and the \emph{GRP condition} is thus a disjunction of generalized Rabin pairs..
\end{definition}
W.l.o.g.~we assume $k>0$ and $\ell_i>0$ for each $i\in\{1,\ldots,k\}$ (whenever $\ell_i=0$ we could set $\I_i=\{Q\}$). Although the type of the condition allows for huge instances of the condition, the construction of~\cite{cav} (producing this disjunctive normal form) guarantees efficiency not worse than that of the traditional determinization approach. For a formula of size $n$, it is guaranteed that $k\leq 2^n$ and $\ell_i\leq n$ for each $i\in\{1,\ldots,k\}$. Further, the size of the state space is at most $2^{\mathcal O(2^n)}$. Moreover, consider \emph{``infinitary''} formulae, where each atomic proposition has both $\F$ and $\G$ as ancestors in the syntactic tree of the formula. Since the first component of the state space is always the same, the size of the state space is bounded by $2^{|Ap|}$ as the automaton only remembers the last letter read. We will make use of this fact later.

\subsection{Degeneralization}

As already discussed, one can blow up any automaton with generalized Rabin pairs and obtain a Rabin automaton. We need the following notation. For any $n\in\mathbb N$, let $[1..n]$ denote the set $\{1,\ldots,n\}$ equipped with the operation $\oplus$ of cyclic addition, i.e.~$m\oplus1=m+1$ for $m<n$ and $n\oplus1=1$. 

The DGRW defined above can now be degeneralized as follows.
For each $i\in\{1,\ldots,k\}$, multiply the state space by $[1..\ell_i]$ 
to keep track for which $I_i^j$ 
we are currently waiting for. Further, adjust the transition function so that we leave the $j$th copy once we visit $I_i^j$ and immediately go to the next copy. Formally, for $\sigma\in\Sigma$ 
set $(q,w_1,\ldots,w_k)\tran{\sigma}(r,w_1',\ldots,w_k')$ if $q\tran{\sigma}r$ and $w_i'=w_i$ for all $i$ with $q\notin I_i^{w_i}$ and $w_i'=w_i\oplus1$ otherwise.

The resulting blow-up factor is then the following:

\begin{definition}[Degeneralization index]
For a GRP condition $\mathcal{GR}=\{(F_i,\I_i)\mid i\in[1..k]\}$, we define the \emph{degeneralization domain} 
$\B:=\prod_{i=1}^k [1..|\I_i|]$ and the \emph{degeneralization index of $\mathcal{GR}$} to be $|\B|=\prod_{i=1}^k |\I_i|$.
\end{definition}
The state space of the resulting Rabin automaton is thus $|\B|$-times bigger and the number of pairs stays the same. Indeed, for each $i\in\{1,\ldots,k\}$ we have a Rabin pair \vspace*{-0.3em}
$$\Big(F_i\times \B,I_i^{\ell_i}\times\{b\in \B\mid b(i)=\ell_i\}\Big)$$ \myspace\myspace

\begin{example}
In Example~\ref{ex:trans-fin} there is one pair and the degeneralization index is 2.
\end{example}

\begin{example}\label{ex:fair}
For a conjunction of three fairness constraints $\varphi=(\F\G a\vee\G\F b)\wedge (\F\G c \vee \G\F d)\wedge (\F\G e \vee \G\F f)$, the B\"uchi components $\I_i$'s of the equivalent GRP condition correspond to $\true,b,d,f,b\wedge d,b\wedge f,d\wedge f,b\wedge d\wedge f$. The degeneralization index is thus $|\B|=1\cdot1\cdot1\cdot1\cdot2\cdot2\cdot2\cdot3=24$. For four constraints, it is $1\cdot1^4\cdot2^6\cdot 3^4 \cdot 4=20736$. One can easily see the index grows doubly exponentially.
\end{example}

\myspace\myspace

\section{Probabilistic Model Checking}

\myspace

\newcommand{\trans}{\delta}
\newcommand{\MEC}{\mathsf{MEC}}
\newcommand{\LP}{\mathsf{LP}}

In this section, we show how automata with generalized Rabin pairs can 
significantly speed up model checking of Markov decision processes 
(i.e., probabilistic model checking).
For example, for the fairness constraints of the type mentioned in 
Example~\ref{ex:fair} the speed-up is by a factor that is doubly exponential. 
Although there are specialized algorithms for checking properties under 
strong fairness constraints (implemented in PRISM), our approach is general 
and speeds up for a wide class of constraints. 
The combinations (conjunctions, disjunctions) of properties not expressible 
by small Rabin automata (and/or Streett automata) are infeasible for the 
traditional approach, while we show that automata with generalized Rabin pairs 
often allow for efficient model checking.
First, we present the theoretical model-checking algorithm for the new type 
of automata and the theoretical bounds for savings. 
Second, we illustrate the effectiveness of the approach experimentally.

\myspace

\subsection{Model checking using generalized Rabin pairs}

We start with the definitions of Markov decision processes (MDPs), and present the 
model-checking algorithms. For a finite set $V$, let $\mathrm{Distr}(V)$ denote the set of probability distributions on $V$.

\begin{definition}[MDP and MEC]
A \emph{Markov decision process (MDP)} $\mathcal M=(V, E,(V_0,V_P),\trans)$ 
consists of a finite directed {\em MDP graph} $(V,E)$, a partition $(V_0,V_P)$ 
of the \emph{finite} set $V$ of vertices into player-0 vertices ($V_0$) and
probabilistic vertices ($V_P$), and a probabilistic transition function 
$\trans$: $V_P \rightarrow  \mathrm{Distr}(V)$
such that for all vertices $u \in V_P$ and $v \in V$  we have $(u,v) \in E$ iff $\trans(u)(v)>0$.

An \emph{end-component} $U$ of an MDP is a set of its vertices such that (i)~the subgraph induced by $U$ is 
strongly connected and (ii)~for each edge $(u,v) \in E$, if $u \in U \cap V_P$, then $v \in U$
(i.e., no probabilistic edge leaves $U$).

A \emph{maximal end-component (MEC)} is an end-component that is maximal w.r.t.~to
the inclusion ordering.
\end{definition}

If $U_1$ and $U_2$ are two end-components and $U_1 \cap U_2 \neq \emptyset$, 
then $U_1 \cup U_2$ is also an end-component. Therefore, every MDP induces a unique set of its MECs, called \emph{MEC decomposition}.

For precise definition of semantics of MDPs we refer to~\cite{Puterman:book}.
Note that MDPs are also defined in an equivalent way in literature with a 
set of actions such that every vertex and choice of action determines the
probability distribution over the successor states; the choice of 
actions corresponds to the choice of edges at player-0 vertices of our 
definition.

The standard model-checking algorithm for MDPs proceeds in several steps. Given an MDP $\mathcal M$ 
and an LTL formula $\varphi$
\myspaceb
\begin{enumerate}
 \item compute a deterministic automaton $\mathcal A$ recognizing the language of $\varphi$,
 \item compute the product $\product=\mathcal M\times\mathcal A$,
 \item solve the product MDP $\product$.
\end{enumerate}
\myspaceb
The algorithm is generic for all types of deterministic $\omega$-automata $\mathcal A$. The leading probabilistic model checker \textsf{PRISM}~\cite{prism} re-implements \textsf{ltl2dstar}~\cite{ltl2dstar} that transforms $\varphi$ into a deterministic \emph{Rabin} automaton. This approach employs Safra's determinization and thus despite many optimization often results in an unnecessarily big automaton. 

There are two ways to fight the problem. Firstly, one can strive for smaller Rabin automata. Secondly, one can employ other types of $\omega$-automata. As to the former, we have plugged our implementation \textsf{Rabinizer}~\cite{atva} of the approach~\cite{cav} into \textsf{PRISM}, which already results in considerable improvement. For the latter, Example~\ref{ex:trans-infin} shows that Muller automata can be smaller than Rabin automata. However, explicit representation of Muller acceptance conditions is typically huge. Hence the third step to solve the product MDP would be too expensive. Therefore, we propose to use automata with generalized Rabin pairs.

On the one hand, DGRW often have small state space after translation. Actually, it is the same as the state space of the intermediate Muller automaton of~\cite{cav}. Compared to the corresponding naively degeneralized DRW it is $|\B|$ times smaller (one can still perform some optimizations in the degeneralization process, see the experimental results). 

On the other hand, as we show below the acceptance condition is still algorithmically efficient to handle. 
We now present the steps to solve the product MDP for a GRP acceptance condition, i.e.~a disjunction of
generalized Rabin pairs.
Consider an MDP with $k$ generalized Rabin pairs  $(F_i,\{I_i^1,\ldots,I_i^{\ell_i}\})$, for $i=1,2,\ldots,k$.
The steps of the computation are as follows:
\begin{enumerate}
\item For $i=1,2,\ldots, k$;
\begin{enumerate}
\item Remove the set of states $F_i$ from the MDP.
\item Compute the MEC decomposition.
\item If a MEC $C$ has a non-empty 
intersection with each $I_i^j$, for $j=1,2,\ldots,\ell_i$,
then include $C$ as a winning MEC.
\item let $W_i$ be the union of winning MECs (for the $i$th pair).
\end{enumerate}
\item Let $W$ be the union of $W_i$, i.e.~$W=\bigcup_{i=1}^k W_i$.
\item The solution (or optimal value of the product MDP) is the 
maximal probability to reach the set $W$.
\end{enumerate}
Given an MDP with $n$ vertices and $m$ edges, let $\MEC(n,m)$ denote the
complexity of computing the MEC decomposition; and
$\LP(n,m)$ denotes the complexity to solve linear-programming solution
with $m$ constraints over $n$ variables.

\begin{theorem}
Given an MDP with $n$ vertices and $m$ edges with $k$ generalized Rabin pairs 
$(F_i,\{I_i^1,\ldots,I_i^{\ell_i}\})$, for $i=1,2,\ldots,k$,
the solution can be achieved in time $\mathcal O(k \cdot \MEC(n,m) + n \cdot \sum_{i=1}^k \ell_i) + 
\mathcal O(\LP(n,m))$.
\end{theorem}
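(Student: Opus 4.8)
The plan is to establish correctness of the three-step algorithm first, and then account for the running time claimed in the theorem. The correctness argument rests on the standard theory of MDPs: for a product MDP with an $\omega$-regular acceptance condition, the optimal probability of satisfying the condition equals the maximal probability of reaching the set of \emph{winning} end-components, i.e.\ those end-components $C$ that, when played forever inside, almost surely produce an accepting run. So the first step is to characterize which MECs are winning for the GRP condition $\bigvee_{i=1}^k\big(F_i,\{I_i^1,\ldots,I_i^{\ell_i}\}\big)$. I would argue that a MEC $C$ is winning for the $i$th pair precisely when $C\cap F_i=\emptyset$ (so we can avoid $F_i$ forever) and $C\cap I_i^j\neq\emptyset$ for every $j$ (so each B\"uchi target is reachable within $C$). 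The key fact is that inside a single end-component, if a state is in the component then it is visited infinitely often under a suitable stationary strategy that cycles through all states of $C$; hence having each $I_i^j$ represented somewhere in $C$ suffices to visit all of them infinitely often almost surely, while having $F_i$ entirely absent guarantees $F_i\cap\Inf(\rho)=\emptyset$. This matches exactly the acceptance definition $F_i\cap\Inf(\rho)=\emptyset$ and $I_i^j\cap\Inf(\rho)\neq\emptyset$ for all $j$. The reason we remove $F_i$ \emph{before} computing the MEC decomposition (rather than just testing $C\cap F_i=\emptyset$ on the full decomposition) is subtle but important: a MEC of the original MDP intersecting $F_i$ may contain a \emph{sub}-end-component avoiding $F_i$, and that sub-component can still be winning, so we must recompute the decomposition on the graph with $F_i$ deleted to expose these smaller winning components.

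Having fixed the characterization, the correctness of the overall algorithm follows by a union argument: $W=\bigcup_i W_i$ is exactly the set of states lying in some end-component that is winning for some pair, and because reachability probabilities compose, the maximal probability of eventually entering $W$ and then staying in a winning component forever equals the optimal satisfaction probability. The final reachability computation is solved by the standard linear program for maximal reachability in MDPs.

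For the running time I would simply tally the three phases. The outer loop runs $k$ times; each iteration deletes $F_i$ (linear and absorbable), computes one MEC decomposition at cost $\MEC(n,m)$, and then tests each resulting MEC against the $\ell_i$ targets $I_i^1,\ldots,I_i^{\ell_i}$. The membership tests across all MECs of a single decomposition touch each vertex at most once per target, so the testing cost in iteration $i$ is $\mathcal O(n\cdot\ell_i)$; summing over $i$ gives $\mathcal O\big(k\cdot\MEC(n,m)+n\sum_{i=1}^k\ell_i\big)$. The concluding maximal-reachability LP over the product contributes the additive $\mathcal O(\LP(n,m))$ term, yielding the stated bound.

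The main obstacle is not the bookkeeping but getting the winning-MEC characterization exactly right, in particular justifying the re-decomposition after removing $F_i$ and verifying that intersecting every $I_i^j$ within one end-component is both necessary and sufficient for almost-sure infinitely-often visits to all of them. Once that lemma is pinned down, everything else is a routine reachability and complexity accounting. I would therefore spend the bulk of the argument on that characterization and treat the timing analysis as a short epilogue.
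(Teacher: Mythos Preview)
Your proposal is correct, and in fact it supplies more than the paper does: the paper states the theorem immediately after describing the three-step algorithm and gives no explicit proof at all, treating both correctness and the complexity bound as immediate from the algorithm description and standard MDP theory. Your argument---characterizing winning end-components for a generalized Rabin pair, explaining why one must re-decompose after deleting $F_i$ (to expose winning sub-end-components hidden inside a MEC that touches $F_i$), reducing to maximal reachability of $W$, and then tallying $k$ MEC decompositions plus $\mathcal O(n\ell_i)$ intersection tests per pair plus one LP---is exactly the justification the paper leaves implicit, and it is sound.
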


\begin{remark}
The best known complexity to solve MDPs with Rabin conditions 
of $k$ pairs require time $\mathcal O(k \cdot \MEC(n,m)) + \mathcal O(\LP(n,m))$ time~\cite{luca-thesis}.
Thus degeneralization of generalized Rabin pairs to Rabin conditions 
and solving MDPs would require time $\mathcal O(k \cdot \MEC(|\B|\cdot n,|\B| \cdot m)) 
+ \mathcal O(\LP(|\B|\cdot n,|\B|\cdot m))$ time.
The current best known algorithms for maximal end-component decomposition 
require at least $\mathcal O(m \cdot n^{2/3})$ time~\cite{CH11}, and the simplest algorithms
that are typically implemented require $\mathcal O(n \cdot m)$ time.
Thus our approach is more efficient at least by a factor of $B^{5/3}$ (given
the current best known algorithms), and
even if both maximal end-component decomposition and linear-programming
can be solved in linear time, our approach leads to a speed-up by a 
factor of $|\B|$, i.e.~exponential in $\mathcal O(k)$ the number of non-trivially generalized Rabin pairs.
In general if $\beta\geq 1$ is the sum of the exponents required to
solve the MEC decomposition (resp.\ linear-programming),
then our approach is better by a factor of $|\B|^\beta$. 
\end{remark}


\begin{example}
A Rabin automaton for $n$ constraints of Example~\ref{ex:fair} is of doubly exponential size, which is also the factor by which the product and thus the running time grows. However, as the formula is ``infinitary'' (see end of Section~\ref{ssec:dgrw}), the state space of the generalized automaton is $2^{Ap}$ and the product is of the very same size as the original system since the automaton only monitors the current labelling of the state.
\end{example}


\myspace\myspace

\subsection{Experimental results}
\label{sec:pMC:experiments}
In this section, we compare the performance of 

\myspaceb

\begin{description}
 \item[\L~]  the original \textsf{PRISM} with its implementation of \textsf{ltl2dstar} producing Rabin automata,
 \item[\D~]  \textsf{PRISM} with \textsf{Rabinizer}~\cite{atva} (our implementation of~\cite{cav}) producing DRW via \emph{optimized} degeneralization of DGRW, and
 \item[\GR] \textsf{PRISM} with \textsf{Rabinizer} producing DGRW and with the modified MEC checking step.
\end{description}

\myspaceb

We have performed a case study on the Pnueli-Zuck randomized mutual exclusion protocol~\cite{PZ86} implemented as a PRISM benchmark. We consider the protocol with 3, 4, and 5 participants. The sizes of the respective models are $s_3=2\,368$, $s_4=27\,600$, and $s_5=308\,800$ states. We have checked these models against several formulae illustrating the effect of the degeneralization index on the speed up of our method; see Table~\ref{table:mdp-experiments}.

In the first column, there are the formulae in the form of a \textsf{PRISM} query. We ask for a maximal/minimal value over all schedulers. Therefore, in the $P_{max}$ case, we create an automaton for the formula, whereas in the case of $P_{min}$ 
we create an automaton for its negation. The second column then states the number $i$ of participants, thus inducing the respective size $s_i$ of the model.

The next three columns depict the size of the product of the system and the automaton, for each of the \textbf{\L}, \textbf{\D}, \textbf{\GR} variants. The size is given as the ratio of the actual size and the respective $s_i$. The number then describes also the ``effective'' size of the automaton when taking the product. The next three columns display the total running times for model checking in each variant.

The last three columns illustrate the efficiency of our approach. The first column ${t_{\textbf{\D}}}/{t_{\textbf{\GR}}}$ states the time speed-up of the DGRW approach when compared to the corresponding degeneralization. The second column states the degeneralization index $|\B|$. The last column ${t_{\textbf{\L}}}/{t_{\textbf{\GR}}}$ then displays the overall speed-up of our approach to the original PRISM.

In the formulae, an atomic proposition $p_i=j$ denotes that the $i$th participant is in its state $j$. The processes start in state 0. In state 1 they want to enter the critical section. State 10 stands for being in the critical section. After leaving the critical section, the process re-enters state 0 again.

\newcommand{\equal}{\mathord{=}}
\newcommand{\nequal}{\mathord{\neq}}

\begin{table}[!th]
\vspace{-2em}
\caption{Experimental comparison of \textbf{\L}, \textbf{\D}, and \textbf{\GR} methods. All measurements performed on Intel i7 with 8 GB RAM. The sign ``$-$'' denotes either crash, out-of-memory, time-out after 30 minutes, or a ratio where one operand is $-$.}
\label{table:mdp-experiments}
\vspace*{-2em}
$$\begin{array}{|l|l|rrr|rrr|rc|r|}
\hline
\text{Formula}&\text{\#}&\frac{s_{\textbf{\L}}}{s_i}&\frac{s_{\textbf{\D}}}{s_i}&\frac{s_{\textbf{\GR}}}{s_i}& t_{\textbf{\L}}&t_{\textbf{\D}}&t_{\textbf{\GR}}&\frac{t_{\textbf{\D}}}{t_{\textbf{\GR}}}&|\B|&\frac{t_{\textbf{\L}}}{t_{\textbf{\GR}}}\\\hline
P_{max}=?[\G\F p_1\equal10   &3&4.1&2.6&1&1.2&0.4&0.2&2.2&3&6.8\\
\qquad\wedge\ \G\F p_2\equal10  &4&4.3&2.7&1&17.4&1.8&0.3&6.4&3&60.8\\
\qquad\wedge\ \G\F p_3\equal10]&5&4.4&2.7&1&257.5&15.2&0.6&26.7&3&447.9\\\hline
P_{max}=?[\G\F p_1\equal10\wedge\G\F p_2\equal10  &4&6&3.5&1&27.3&2.5&0.9&2.8&4&32.1\\
\qquad \wedge\ \G\F p_3\equal10 \wedge \G\F p_4\equal10]&5&6.2&3.6&1&408.5&17.8&0.9&20.4&4&471.2\\\hline
P_{min}=?[\G\F p_1\equal10\wedge\G\F p_2\equal10 &4&-&1&1&-&36.5&36.3&1&1&-\\
\qquad \wedge\ \G\F p_3\equal10 \wedge \G\F p_4\equal10]&5&-&1&1&-&610.6&607.2&1&1&-\\\hline
P_{max}=?[ (\G\F p_1 \equal 0 \vee \F\G p_2\nequal0)  &3&79.7&1.9&1&225.5&4.1&2.2&1.8&2&101.8\\
\qquad\wedge(\G\F p_2 \equal 0 \vee \F\G p_3\nequal0)]&4&-&1.9&1&-&61.7&29.2&2.1&2&-\\
&5&-&1.9&1&-&1007&479&2.1&2&-\\\hline
P_{max}=?[ (\G\F p_1 \equal 0 \vee \F\G p_1\nequal0)  &3&23.3&1.9&1&66.4&3.92&2.2&1.8&2&30.7\\
\qquad\wedge(\G\F p_2 \equal 0 \vee \F\G p_2\nequal0)]&4&23.3&1.9&1&551.5&61&28.2&2.2&2&19.6\\
&5&-&1.9&1&-&1002.7&463&2.2&2&-\\\hline
P_{max}=?[ (\G\F p_1 \equal 0 \vee \F\G p_2\nequal0)  &3&-&16.3&1&-&122.1&7.1&17.2&24&-\\
\qquad\wedge(\G\F p_2 \equal 0 \vee \F\G p_3\nequal0)    &4&-&-&1&-&-&75.6&-&24&-\\
\qquad\wedge(\G\F p_3 \equal 0 \vee \F\G p_1\nequal0)] &5&-&-&1&-&-&1219.5&-&24&-\\\hline
P_{max}=?[ (\G\F p_1 \equal 0 \vee \F\G p_1\nequal0)  &3&-&12&1&-&76.3&7.2&12&24&-\\
\qquad\wedge(\G\F p_2 \equal 0 \vee \F\G p_2\nequal0) &4&-&12.1&1&-&1335.6&78.9&19.6&24&-\\
\qquad\wedge(\G\F p_3 \equal 0 \vee \F\G p_3\nequal0)] &5&-&-&1&-&-&1267.6&-&24&-\\\hline
P_{min}=?[  (\G\F p_1\nequal 10\vee \G\F p_1\equal 0 \vee \F\G p_1\equal1) &3&2.1&1&1&1.2&0.9&0.8&1&1&1.5\\
\qquad\wedge \G\F p_1 \nequal0 \wedge\G\F p_1\equal1]&4&2.1&1&1&11.8&8.7&8.8&1&1&1.3\\
\qquad
&5&2.1&1&1&186.3&147.5&146.2&1&1&1.3\\\hline
P_{max}=?[(\G p_1\nequal10 \vee \G p_2\nequal10 \vee \G p_3\nequal10)&3&-&32&5.9&-&405&80.1&5.1&8&-\\
~~\wedge (\F\G p_1\nequal 1 \vee\G\F p_2=1\vee\G\F p_3=1)&4&-&-&6.4&-&-&703.5&-&8&-\\
~~\wedge (\F\G p_2\nequal 1 \vee\G\F p_1=1\vee\G\F p_3=1)&5&-&-&-&-&-&-&-&8&-\\\hline
P_{min}=?[(\F \G p_1\nequal0 \vee \F \G p_2\nequal0 \vee \G \F p_3\equal0)&3&55.9&4.7&1&289.7&12.6&3.4&3.7&12&84.3\\
~~\vee (\F\G p_1\nequal10 \wedge \G\F p_2=10 \wedge \G\F p_3=10)&4&-&4.6&1&-&194.5&33.2&5.9&12&-\\
&5&-&-&1&-&-&543&-&12&-\\
\hline
\end{array}$$
\vspace*{-3em}
\end{table}

Formulae 1 to 3 illustrate the effect of $|\B|$ on the ratio of sizes of the product in the \textbf{\D} and \textbf{\GR} cases, see $\frac{s_{\textbf{\D}}}{s_i}$, and ratio of the required times. 
The theoretical prediction is that ${s_{\textbf{\D}}}/{s_{\textbf{\GR}}}=|\B|$. Nevertheless, due to optimizations done in the degeneralization process, the first is often slightly smaller than the second one, see columns $\frac{s_{\textbf{\D}}}{s_i}$ and $\B$. (Note that $s_{\textbf{\GR}}/s_i$ is 1 for ``infinitary'' formulae.) 
For the same reason, $\frac{t_{\textbf{\D}}}{t_{\textbf{\GR}}}$ is often smaller than $|\B|$. However, with the growing size of the systems it gets bigger hence the saving factor is larger for larger systems, as discussed in the previous section.

Formulae 4 to 7 illustrate the doubly exponential growth of $|\B|$ and its impact on systems of different sizes. The DGRW approach (\textbf{\GR} method) is often the only way to create the product at all.

Formula 8 is a Streett condition showing the approach still performs competitively. Formulae 9 and 10 combine Rabin and Streett condition requiring both 
big Rabin automata and big Streett automata. Even in this case, the method scales well.
Further, Formula 9 contains non-infinitary behaviour, e.g.~$\G p_1\mathord{\neq}10$. Therefore, the DGRW is of size greater than 1, and thus also the product is bigger as can be seen in the $s_{\textbf{\GR}}/s_i$ column.

\myspace

\section{Synthesis}

\myspace

In this section, we show how generalized Rabin pairs can be used to speed up the computation of a winning strategy in an LTL(\F,\G) game and thus to speed up LTL(\F,\G) synthesis.
A game is defined like an MDP, but with the stochastic vertices replaced by vertices of an adversarial player.

\begin{definition}
A \emph{game} $\mathcal M=(V, E,(V_0,V_1))$ 
consists of a finite directed {\em game graph} $(V,E)$ and a partition $(V_0,V_1)$ 
of the \emph{finite} set $V$ of vertices into player-0 vertices ($V_0$) and
player-1 vertices ($V_1$). 
\end{definition}

An \emph{LTL game} is a game together with an LTL formula with vertices as atomic propositions. Similarly, a \emph{Rabin game} and a \emph{game with GRP condition (GRP game)} is a game with a set of Rabin pairs, or a set of generalized Rabin pairs, respectively. 

A \emph{strategy} is a function $V^*\to E$ assigning to each \emph{history} an outgoing edge of its last vertex. A play conforming to the strategy $f$ of Player 0 is any infinite sequence $v_0v_1\cdots$ satisfying $v_{i+1}=f(v_0\cdots v_i)$ whenever $v_i\in V_0$, and just $(v_i,v_{i+1})\in E$ otherwise. Player 0 has a \emph{winning strategy}, if there is a strategy $f$ such that all plays conforming to $f$ of Player 0 satisfy the LTL formula, Rabin condition or GRP condition, depending on the type of the game. For further details, we refer to e.g.~\cite{PP}.

One way to solve an LTL game is to make a product of the game arena with the DRW corresponding to the LTL formula, yielding a Rabin game. The current fastest solution of Rabin games works in time $\mathcal O(mn^{k+1}kk!)$~\cite{PP}, where $n=|V|,m=|E|$ and $k$ is the number of pairs. Since $n$ is doubly exponential and $k$ singly exponential in the size of the formula, this leads to a doubly exponential algorithm. And indeed, the problem of LTL synthesis is 2-EXPTIME-complete~\cite{DBLP:conf/icalp/PnueliR89}. 

Similarly as for model checking of probabilistic systems, we investigate what happens (1) if we replace the translation to Rabin automata by our new translation and (2) if we employ DGRW instead. The latter leads to the problem of GRP games. In order to solve them, we extend the methods to solve Rabin and Streett games of~\cite{PP}. 

We show that solving a GRP game is faster than first degeneralizing them and then solving the resulting Rabin game. The induced speed-up factor is $|\B|^k$. 
In the following two subsections we show how to solve GRP games and analyze the complexity. The subsequent section reports on experimental results. 
\myspace

\subsection{Generalized Rabin ranking}

We shall compute a ranking of each vertex, which intuitively states how far from winning we are. The existence of winning strategy is then equivalent to
the existence of a ranking where Player 0 can always choose a successor of the current vertex with smaller ranking, i.e.~closer to fulfilling the goal.

Let $(V,E,(V_0,V_1))$ be a game, $\{(F_1,\I_1),\ldots,(F_k,\I_k)\}$ a GRP condition with the corresponding degeneralization domain $\B$. Further, let $n:=|V|$ and denote the set of permutations over a set $S$ by $S!$.

\begin{definition}
A \emph{ranking} is a function $r:V\times B\to R$ where $R$ is the \emph{ranking domain}
$\{1,\ldots,k\}!\times\{0,\ldots,n\}^{k+1}\cup\{\infty\}$.
\end{definition}
The ranking $r(v,\elemB)$ gives information important in the situation when we are in vertex $v$ and are waiting for a visit of $\I_i^{\elemB(i)}$ for each $i$ given by $\elemB\in\B$. As time passes the ranking should decrease. To capture this, we define the following functions.
\begin{definition}\label{def:nxt}
For a ranking $r$ and given $v\in V$ and $\elemB\in \B$, we define $\nxt_v:\B\to\B$
$$\nxt_v(\elemB)(i)=
\begin{cases}
\elemB(i) &\text{if }v\notin\I_i^{\elemB(i)}\\
\elemB(i)\oplus1 &\text{if }v\in\I_i^{\elemB(i)}
\end{cases}
$$
and $\nxt:V\times\B\to R$ \myspace
$$\nxt(v,\elemB)=
\begin{cases}
\min_{(v,w)\in E} r(w,\nxt_v(\elemB)) &\text{if }v\in V_0\\
\max_{(v,w)\in E} r(w,\nxt_v(\elemB)) &\text{if }v\in V_1\\
\end{cases}
$$
where the order on $(\pi_1\cdots\pi_k,w_0w_1\cdots w_k)\in R$ is given by the lexicographic order $>_{lg}$ on $w_0\pi_1w_1\pi_2w_2\cdots\pi_kw_k$ and $\infty$ being the greatest element.
\end{definition}
Intuitively, the ranking $r(v,\elemB)=(\pi_1\cdots\pi_k,w_0w_1\cdots w_k)$ is intended to bear the following information. The permutation $\pi$ states the importance of the pairs. The pair $(F_{\pi_1},\I_{\pi_1})$ is the most important, hence we are not allowed to visit $F_{\pi_1}$ and we desire to either visit $\I_{\pi_1}$, or not visit $F_{\pi_2}$ and visit $\I_{\pi_2}$ and so on. If some important $F_i$ is visited it becomes less important. The importance can be freely changed only finitely many ($i_0$) times. Otherwise, only less important pairs can be permuted if a more important pair makes good progress. Further, $w_i$ measures the worst possible number of steps until visiting $\I_{\pi_i}$. This intended meaning is formalized in the following notion of good rankings.
\begin{definition}\label{def:good}
A ranking $r$ is \emph{good} if for every $v\in V,\elemB\in\B$ with $r(v,\elemB)\neq\infty$ we have $r(v,\elemB)>_{v,\elemB} \nxt(v,\elemB)$.

We define $(\pi_1\cdots\pi_k,w_0w_1\cdots w_k) >_{v,\elemB} (\pi_1'\cdots\pi_k',w_0'w_1'\cdots w_k')$ if either $w_0>w_0'$, or $w_0=w_0'$ with $>_{v,\elemB}^1$ hold. Recursively, $>_{v,\elemB}^\ell$ holds if one of the following holds:
\begin{itemize}
 \item $\pi_\ell>\pi_\ell'$
 \item $\pi_\ell=\pi_\ell',\;v\not\models F_{\pi_\ell}$ and $w_\ell>w_\ell'$
 \item $\pi_\ell=\pi_\ell',\;v\not\models F_{\pi_\ell}$ and $v\models \I_{\pi_\ell}^{\elemB(\pi_\ell)}$
 \item $\pi_\ell=\pi_\ell',\;v\not\models F_{\pi_\ell}$ and $w_\ell=w_\ell'$ and $>_{v,\elemB}^{\ell+1}$ holds (where $>_{v,\elemB}^{k+1}$ never holds)
\end{itemize}
Moreover, if one of the first three cases holds, we say that $\succ_{v,\elemB}^\ell$ holds.
\end{definition}
Intuitively, $>$ means the second element is closer to the next milestone and $\succ^\ell$, moreover, that it is so because of the first $\ell$ pairs in the permutation.

Similarly to~\cite{PP}, we obtain the following correctness of the construction.
Note that for $|\B|=1$, the definitions of the ranking here and the \emph{Rabin ranking} of~\cite{PP} coincide. Further, the extension with $|\B|>1$ bears some similarities with the Streett ranking of~\cite{PP}.

\begin{theorem}
For every vertex $v$, Player 0 has a winning strategy from $v$ if and only if there is a good ranking $r$ and $\elemB\in\B$ with $r(v,\elemB)\neq\infty$. 
\end{theorem}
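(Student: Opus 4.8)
I would prove the two implications separately, following closely the Rabin and Streett ranking arguments of~\cite{PP}, and reading the ranking as a progress measure: finiteness of $r$ certifies a winning strategy, and conversely a winning strategy induces such a measure. The guiding observation is that the $\B$-component together with the map $\nxt_v$ isolates the ``generalized B\"uchi'' cycling through the $I_p^j$ (a Streett-like phenomenon) from the ``Rabin'' bookkeeping carried by the permutation; this separation is exactly what will let the counters stay in $\{0,\ldots,n\}$ rather than $\{0,\ldots,n|\B|\}$.

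For the direction from a good ranking to a winning strategy, suppose $r(v,\elemB)\neq\infty$. Player~0 follows the strategy that at $(v,\elemB)$ with $v\in V_0$ moves to a successor attaining the minimum in $\nxt(v,\elemB)$, and I set $\elemB_{t+1}=\nxt_{v_t}(\elemB_t)$ along the play. For Player-1 vertices I first prove a compatibility lemma, namely that $a>_{v,\elemB}b$ and $b'\le_{lg}b$ imply $a>_{v,\elemB}b'$; since every successor $w$ gives $r(w,\nxt_v(\elemB))\le_{lg}\nxt(v,\elemB)$, this lemma upgrades the goodness inequality $r(v,\elemB)>_{v,\elemB}\nxt(v,\elemB)$ into $r(v_t,\elemB_t)>_{v_t,\elemB_t}r(v_{t+1},\elemB_{t+1})$ for \emph{every} adversarial choice. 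Thus the rank strictly decreases at each step of any conforming play $\rho$. I then analyze the tail: $w_0$ is non-increasing and integer-bounded, hence stabilizes, after which every witnessing level is $\ge1$; let $\ell^*$ be the least level witnessed infinitely often and let $p=\pi_{\ell^*}$ after the prefix $\pi_1\cdots\pi_{\ell^*}$ stabilizes (using that case-1 witnesses strictly decrease the bounded entry $\pi_{\ell^*}$ and so occur only finitely often). From that point $v_t\notin F_p$ always, giving $F_p\cap\Inf(\rho)=\emptyset$; and the third clause at level $\ell^*$, which forces $v_t\in I_p^{\elemB_t(p)}$, must fire infinitely often, for otherwise the bounded counter $w_{\ell^*}$ would strictly decrease infinitely often without reset. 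Since $\nxt_{v}$ advances $\elemB(p)$ cyclically exactly when $v\in I_p^{\elemB(p)}$, these infinitely many advances cycle $\elemB(p)$ through all of $\{1,\ldots,\ell_p\}$, so each $I_p^j$ is visited infinitely often and pair $p$ is satisfied.

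For the converse, GRP games are determined (by Borel determinacy, or via the plays-correspondence with the degeneralized Rabin game), so I fix a winning strategy $f$ for Player~0 on her winning region $W$ and construct a good ranking that is finite on $W\times\B$, mimicking the completeness proof of~\cite{PP}. The permutation encodes a dynamic priority on the pairs: a pair whose $F$-set is visited is demoted, and each demotion is charged to $w_0$, which stays bounded because on $W$ the priority can change only boundedly often. Each counter $w_i$ is set to the attractor distance \emph{in the $V$-graph} to the current target event for $\pi_i$; such distances are at most $n=|V|$, which is precisely what keeps the domain $\{0,\ldots,n\}^{k+1}$, the generalized aspect being absorbed into the $\elemB$-cycling rather than into the counters. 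I would then verify $r(v,\elemB)>_{v,\elemB}\nxt(v,\elemB)$ at every $(v,\elemB)$ by a case split according to whether $f$ makes attractor progress, hits the current target $I_p^{\elemB(p)}$, or triggers a demotion.

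The main obstacle is this completeness direction: showing that one ranking with counters bounded by $n$ simultaneously enforces the Rabin priority discipline and the cyclic scheduling of all the $I_p^j$. The delicate point is that the attractor-distance assignment, combined with the cycling of $\elemB$ by $\nxt_v$, must satisfy the goodness inequalities \emph{uniformly in} $\elemB$, i.e.\ progress measured in the $V$-graph must not be disrupted by the $\B$-component. A secondary but indispensable technical step, already invoked in the soundness argument, is the compatibility lemma relating $>_{lg}$ (used inside $\nxt$ via the $\min$/$\max$) with the vertex-indexed order $>_{v,\elemB}$ (used in the definition of goodness); establishing this compatibility is what makes the adversarial $\max$ interact correctly with the goodness condition.
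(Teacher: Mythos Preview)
Your soundness direction is essentially the paper's proof. The paper also uses the finite-memory strategy with memory $\B$, update $\nxt_v$, and the choice of the $\min$-achieving successor; it then lets $\ell$ be the least level with $\succ^\ell$ occurring infinitely often, argues that the prefix $\pi_1,\ldots,\pi_\ell$ and $w_0,\ldots,w_{\ell-1}$ stabilize, that $F_{\pi_\ell}$ is avoided thereafter, and that well-foundedness of $\{0,\ldots,n\}$ forces the $I_{\pi_\ell}^{\elemB(\pi_\ell)}$-clause to fire infinitely often, which together with the cycling of $\elemB(\pi_\ell)$ under $\nxt_v$ gives all $I_{\pi_\ell}^j$ infinitely often. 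Your compatibility lemma between $>_{lg}$ and $>_{v,\elemB}$ makes explicit a step the paper leaves implicit when it asserts $r^i>_{v^i,\elemB^i}r^{i+1}$ for Player-1 moves.

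Your completeness direction takes a different route. The paper does \emph{not} start from a winning strategy and build attractor distances. Instead it first proves a $\mu$-calculus characterization of the winning region,
\[
\mathfrak{Win}(\mathcal{GR})=\mu Z.\;\mathfrak{GR}(\mathcal{GR},\true,\pred Z),
\]
where $\mathfrak{GR}$ unfolds as $\bigvee_i \nu Y.\bigwedge_j \mu X.\,\mathfrak{GR}(\ldots)$, and then reads off the ranking from the \emph{iteration counters} of the nested least fixpoints at the moment a vertex is first discovered: the permutation records the sequence of pairs chosen down the recursion, $w_0$ is the outer $\mu Z$ count, each $w_m$ is the $m$-th nested $\mu X$ count, and the $\B$-coordinate is matched to the index $j$ of the conjunction. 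This buys two things your sketch does not: boundedness of every counter by $n$ is automatic (a least fixpoint over $V$ stabilizes in at most $n$ rounds), and the very same formula is the symbolic algorithm announced in the paper. Your strategy-based attractor construction is a reasonable intuition, but the ``main obstacle'' you flag---making the ranking good uniformly in $\elemB$ with counters bounded by $n$ rather than $n|\B|$---is precisely what the nested $\bigwedge_j \mu X$ structure handles for free; carrying your plan out rigorously would most likely reconstruct that fixpoint hierarchy.
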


\myspace

\subsection{A fixpoint algorithm}


In this section, we show how to compute the smallest good ranking and thus solve the GRP game.
Consider a lattice of rankings ordered component-wise, i.e. $r_1>_c r_2$ if for every $v\in V$ and $\elemB\in B$, we have $r_1(v,\elemB)>_{lg} r_2(v,\elemB)$. This induces a complete lattice. The minimal good ranking is then a least fixpoint of the operator $\lift$ on rankings given by:
$$\lift(r)(v,\elemB)=\max\big\{r(v,\elemB),\min\{x\mid x>_{v,\elemB}\nxt(v,\elemB)\}\big\}$$
where the optima are considered w.r.t. ${>_{lg}}$. Intuitively, if Player 0 cannot choose a successor smaller than the current vertex (or all successors of a Player 1 vertex are greater), the ranking of the current vertex must rise so that it is greater.

\begin{theorem}
The smallest good ranking can be computed in time $\mathcal O(mn^{k+1}kk!\cdot |\B|)$ and space $(nk\cdot|\B|)$.
\end{theorem}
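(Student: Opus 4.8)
The plan is to compute the least fixpoint of the operator $\lift$ by a standard Kleene-style iteration, starting from the bottom element of the lattice (the ranking assigning the minimal value everywhere) and repeatedly applying $\lift$ until stabilization. The correctness of this — that the least fixpoint exists and equals the smallest good ranking — follows from the preceding theorem together with the fact that $\lift$ is monotone on the complete lattice of rankings ordered by $>_c$. So the burden of this theorem is purely the complexity analysis: bounding the number of iterations and the cost of a single evaluation of $\lift$.

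\emph{Counting iterations.} First I would bound the height of the lattice, which controls how many times the value of any single entry $r(v,\elemB)$ can strictly increase. Each value lives in the ranking domain $R=\{1,\ldots,k\}!\times\{0,\ldots,n\}^{k+1}\cup\{\infty\}$, so a single entry can take at most $k!\cdot (n{+}1)^{k+1}+1$ distinct values and hence can rise at most $\mathcal O(n^{k+1}k!)$ times. Since $\lift$ is monotone and each application that changes $r$ must strictly increase at least one entry, and there are $n\cdot|\B|$ entries, the total number of \emph{productive} applications across all entries is $\mathcal O(n\cdot|\B|\cdot n^{k+1}k!)=\mathcal O(n^{k+2}k!\cdot|\B|)$. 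The standard refinement — used for Rabin/Streett ranking in~\cite{PP} — is that one does not recompute every entry on every sweep; instead one processes updates via a worklist so that the cumulative work is charged against these individual value increases rather than against a global round count. This is exactly where the factor $n^{k+1}kk!\cdot|\B|$ in the claimed bound comes from once combined with the per-update cost below.

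\emph{Cost of a single update and the worklist charging.} The per-entry evaluation of $\nxt(v,\elemB)$ requires taking a $\min$ or $\max$ of $r(w,\nxt_v(\elemB))$ over the outgoing edges $(v,w)$, and computing the threshold $\min\{x\mid x>_{v,\elemB}\nxt(v,\elemB)\}$. Each comparison under $>_{v,\elemB}$ walks through the permutation of length $k$, contributing the factor $k$; and $\nxt_v(\elemB)$ is computed coordinate-wise in $\mathcal O(k)$ time. When the value at $(w,\elemB')$ rises, only the entries $(v,\elemB)$ with $(v,w)\in E$ and $\nxt_v(\elemB)=\elemB'$ need to be re-examined, so each such increase triggers re-evaluation along incoming edges only. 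Amortizing the $\mathcal O(k)$-cost comparisons over all $\mathcal O(mn^{k+1}k!\cdot|\B|)$ edge-inspections — $m$ edges, each inspected once per value-rise of its target, across the $n^{k+1}k!\cdot|\B|$-bounded value chain — yields the stated time $\mathcal O(mn^{k+1}kk!\cdot|\B|)$. For space, one simply stores the current ranking, i.e.\ one value of $R$ per entry, giving $nk\cdot|\B|$ (the $k$ accounting for the permutation component of each stored rank).

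\emph{Where the difficulty lies.} The monotonicity of $\lift$ and existence of the least fixpoint are routine, and the height bound on $R$ is elementary arithmetic. The genuinely delicate step is the amortized charging argument: one must verify that the naive bound (number of iterations $\times$ cost per full sweep) can be replaced by the worklist-based bound in which the total number of edge re-examinations is governed by the number of times a target entry's value strictly increases. I would mirror the amortization used for Rabin and Streett ranking in~\cite{PP}, checking carefully that the extra $\B$-component does not increase the per-update comparison cost beyond the $\mathcal O(k)$ already present (since $\nxt_v$ acts coordinate-wise on $\B$ and the order $>_{v,\elemB}$ is still driven by the length-$k$ permutation), so that the single new multiplicative factor relative to~\cite{PP} is exactly $|\B|$, the number of $\B$-copies of each vertex.
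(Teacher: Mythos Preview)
Your proposal is correct and follows essentially the same approach as the paper: a worklist-based least-fixpoint computation of $\lift$, with the time bound obtained by charging each entry $(v,\elemB)$ at most $|R|=\mathcal O(n^{k+1}k!)$ lifts, each costing $\mathcal O(k\cdot\text{out-deg}(v))$, and summing over all $n\cdot|\B|$ entries. Your presentation is more detailed (explicitly invoking monotonicity, the coordinate-wise action of $\nxt_v$, and the edge-side amortization), but the underlying argument and the resulting arithmetic coincide with the paper's proof.
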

\begin{proof}
The lifting operator can be implemented similarly as in~\cite{PP}. With every change, the affected predecessors to be updated are put in a worklist, thus working in time $\mathcal O(k\cdot\text{out-deg}(v))$. Since every element can be lifted at most $|R|$-times, the total time is $ \mathcal O(\sum_{v\in V}\sum_{\elemB\in\B}k\cdot\text{out-deg}(v)\cdot|R|)=|\B|km\cdot n^{k+1}k!$. The space required to store the current ranking is $ \mathcal O(\sum_{v\in V}\sum_{\elemB\in\B}k)=n\cdot|B|\cdot k$. \qed
\end{proof}

We now compare our solution to the one that would solve the degeneralized Rabin game. The number of vertices of the degeneralized Rabin game is $|\B|$ times greater. Hence the time needed is multiplied by a factor $|\B|^{k+2}$, 
instead of $|\B|$ in the case of a GRP game. Therefore, our approach speeds up by a factor of $|\B|^{k+1}$, while the space requirements are the same in both cases, namely $\mathcal O(nk\cdot|\B|)$.

\begin{example}
A conjunction of two fairness constraints of example~\ref{ex:fair} corresponds to $|\B|=2$ and $k=4$, hence we save by a factor of $2^4=16$. A conjunction of three fairness constraints corresponds to $|\B|=24$ and $k=8$, hence we accelerate $24^8\approx10^{11}$ times.
\end{example}

Further, let us note that the computation can be implemented recursively as in~\cite{PP}. The winning set is $\mu Z.\;\mathfrak{GR}(\mathcal{GR},\true,\pred Z)$ where $\mathfrak{GR}(\emptyset,\varphi,W)= W$,
\begin{align*}
 \mathfrak{GR}(\mathcal{GR},\varphi,W)&= \bigvee_{i\in[1..k]}\nu Y.\bigwedge_{j\in[1..|\I_i|]} \mu X.\; \mathfrak{GR}\Big(\mathcal{GR}\setminus\{(F_i,\I_i)\},\varphi\wedge\neg F_i, \\
& ~~~~~~~~~~~~~~~~~~~~~~~~~~~~
 W\vee(\varphi\wedge\neg F_i\wedge I_i^j\wedge \pred Y) \vee (\varphi\wedge\neg F\wedge \pred X) \Big)
\end{align*}
$\pred\varphi=\{u\in V_0\mid\exists (u,v)\in E: v\models\varphi\}\cup \{u\in V_1\mid\forall (u,v)\in E: v\models\varphi\}$ and $\mu$ and $\nu$ denote the least and greatest fixpoints, respectively. The formula then provides a succinct description of a symbolic algorithm.

\myspace

\subsection{Experimental Evaluation}
\label{sec:synthesis:experiments}

\myspaceb

Reusing the notation of Section~\ref{sec:pMC:experiments}, we compare the performance of the methods for solving LTL games. We build and solve a Rabin game using
\begin{description}
 \item[\L~] \hspace*{4.3mm}\textsf{ltl2dstar} producing DRW (from LTL formulae),
 \item[\D~] \quad\textsf{Rabinizer} producing DRW, and
 \item[\GR] \hspace*{1.7mm}\textsf{Rabinizer} producing DGRW.
\end{description}

We illustrate the methods on three different games and three LTL formulae; see
Table~\ref{table:game-experiments}. The games contain 3 resp. 6 resp. 9 vertices. 
Similarly to Section~\ref{sec:pMC:experiments}, $s_i$ denotes the number of vertices in the $i$th arena,
$s_{\textbf{\L}}, s_{\textbf{\D}}, s_{\textbf{\GR}}$ the number of vertices in the resulting games for the three
methods, and $t_{\textbf{\L}}, t_{\textbf{\D}}, t_{\textbf{\GR}}$ the respective running times.

Formula 1 allows for a winning strategy and the smallest ranking is relatively small, hence computed quite fast. 
Formula 2, on the other hand, only allows for larger rankings.
Hence the computation takes longer, but also because in \textbf{\L} and \textbf{\D} cases the automata are larger than for formula 1. While for \textbf{\L} and \textbf{\D}, the product is usually too big, there is a chance to find small rankings in \textbf{\GR} fast.
While for e.g.~$\F\G (a \vee \neg b \vee c)$, the automata and games would be the same for all three methods and the solution would only take less than a second, the more complex formulae 1 and 2 show clearly the speed up. 

\begin{table}[!th]
\caption{Experimental comparison of \textbf{\L}, \textbf{\D}, and \textbf{\GR} methods for solving LTL games. 
Again the sign ``$-$'' denotes either crash, out-of-memory, time-out after 30 minutes, or a ratio where one operand is $-$.}
\label{table:game-experiments}
\vspace{-2em}
$$\begin{array}{|l|l|rrr|rrr|rc|r|}
\hline
\text{Formula}                                                 &s_i&\frac{s_{\textbf{\L}}}{s_i}&\frac{s_{\textbf{\D}}}{s_i}&\frac{s_{\textbf{\GR}}}{s_i}& t_{\textbf{\L}}&t_{\textbf{\D}}&t_{\textbf{\GR}}   &\frac{t_{\textbf{\D}}}{t_{\textbf{\GR}}}&|\B|&\frac{t_{\textbf{\L}}}{t_{\textbf{\GR}}}\\\hline

(\G\F a \wedge \G\F b \wedge \G\F c)                           &3  & 22                       & 7.3                      &                     4   &          63.2 &           1.6&    1.1       &1.4 &9   &                          48.2 \\
\qquad\vee (\G\F \neg a \wedge  \G\F \neg b \wedge  \G\F \neg c)   &6  & 21.3                     & 7.3                      &                     3.7 &         878.6 &          14.1&    7.3      &2 &9  &                          130.3  \\
\qquad                                                         &9  & 20.6                     & 7                        &                     3.6 &             - &           54.8&    31.3    &1.8 &9       &                       -\\\hline

(\G\F a \vee \F\G  b)  \wedge (\G\F c \vee \G\F \neg a)                                       &3  & 21                       & 10                       &                     4   &             - &                 117.5&    12     &9.8 &6                              &-\\
\qquad \wedge (\G\F c \vee \G\F \neg b)                             &6  & 16.2                     & 9.2                      &                     3.7 &             - &                -    &    196.7   &-   &6                              &-\\
                         &9  & 17.6                     & 9.2                     &                     3.6 &             - &               -    &    1017.8   &-   &6                              &-\\\hline
\end{array}$$
\vspace{-2em}
\end{table}





\section{Conclusions}
In this work we considered the translation of the LTL(\F,\G) fragment to 
deterministic $\omega$-automata that is necessary for probabilistic model
checking as well as synthesis.
The direct translation to deterministic Muller automata gives a compact 
automata but the explicit representation of the Muller condition is huge
and not algorithmically amenable.
In contrast to the traditional approach of translation to deterministic Rabin 
automata that admits efficient algorithms but incurs a blow-up in translation, 
we consider deterministic automata with generalized Rabin pairs (DGRW).
The translation to DGRW produces the same compact automata as for Muller
conditions.
We presented efficient algorithms for probabilistic model checking and game 
solving with DGRW conditions which shows that the blow-up of translation 
to Rabin automata is unnecessary.
Our results establish that DGRW conditions provide the convenient formalism
that allows both for compact automata as well as efficient algorithms.
We have implemented our approach in PRISM, and experimental results show 
a huge improvement over the existing methods.
Two interesting directions of future works are (1)~extend our approach to 
LTL with the \U (until) and the \X (next) operators; and 
(2)~consider symbolic computation and Long's acceleration of fixpoint 
computation (on the recursive algorithm), instead of the ranking function based
algorithm for games, and compare the efficiency of both the approaches.



\bibliographystyle{alpha} 
\bibliography{refs-short}

\newpage
\appendix
\section{Proof of Theorem 17 (correctness of the ranking)}

\subsection{Soundness}\label{ssec:sound}

\begin{lemma}
For every good ranking $r$ and vertex $v$ with $r(v,\elemB)\neq\infty$ for some $\elemB\in\B$, Player 0 has a winning strategy from $v$.
\end{lemma}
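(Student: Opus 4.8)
### Soundness lemma proof plan

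The plan is to extract an explicit winning strategy for Player 0 from a good ranking $r$ and show that every play conforming to it satisfies the GRP condition. The natural strategy is the greedy one: in a vertex $v\in V_0$, while tracking the current $\elemB\in\B$, pick the successor $w$ (with $\elemB$ updated to $\nxt_v(\elemB)$) that attains the minimum in the definition of $\nxt(v,\elemB)$, i.e.~the move that makes the ranking decrease maximally. Since $r$ is good, $r(v,\elemB)\neq\infty$ guarantees $r(v,\elemB)>_{v,\elemB}\nxt(v,\elemB)$, so there is always such a strictly-improving move, and because at Player~1 vertices the $\max$ is used, \emph{every} adversarial choice also respects $r(w,\nxt_v(\elemB))\leq\nxt(v,\elemB)<_{v,\elemB}r(v,\elemB)$. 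First I would verify that this strategy keeps the play in the region where the ranking is finite: starting from $(v,\elemB)$ with $r(v,\elemB)\neq\infty$, each successor $(w,\elemB')$ again has finite rank, so $\infty$ is never reached and the strategy is well defined for the whole play.

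The key step is the usual ranking-descent argument applied to the lexicographic-style order $>_{v,\elemB}$ of Definition~\ref{def:good}. Along any conforming play $v_0v_1\cdots$ with the induced sequence $\elemB_0\elemB_1\cdots$ (where $\elemB_{t+1}=\nxt_{v_t}(\elemB_t)$), the relation $r(v_t,\elemB_t)>_{v_t,\elemB_t}r(v_{t+1},\elemB_{t+1})$ holds at every step. I would analyze the first coordinate $w_0$: it can never increase, so it stabilizes, and once it is constant the permutation component $\pi$ can change only finitely often (the permutation is only allowed to be permuted in favour of more important pairs making progress, and since $w_0$ is fixed the $\succ^\ell$-witnessed reorderings are exhausted). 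Hence $\pi$ eventually stabilizes to some fixed permutation, say with most important pair index $p=\pi_1$. From the stabilization point on, the clauses of $>^{1}_{v,\elemB}$ that remain available are exactly: we never visit $F_p$ (otherwise $\pi_1$ would have to drop in importance, contradicting stabilization), and the $w_1$-component strictly decreases unless $v\models\I_p^{\elemB(p)}$, which resets it. Since $w_1$ ranges over a finite set $\{0,\dots,n\}$ and cannot decrease forever, the case $v_t\models\I_p^{\elemB_t(p)}$ must occur infinitely often.

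From that infinitely-often visit of $\I_p$ I would deduce that the $\oplus1$ advancing of the $p$th component of $\elemB$ cycles through all of $[1..|\I_p|]$ infinitely often — because $\nxt_v$ advances $\elemB(p)$ precisely when $v\in\I_p^{\elemB(p)}$ — so \emph{each} $\I_p^j$ for $j\in\{1,\dots,\ell_p\}$ is visited infinitely often, giving $I_p^j\cap\Inf(\rho)\neq\emptyset$ for every $j$. Combined with $F_p\cap\Inf(\rho)=\emptyset$ (no $F_p$-visit after stabilization), this is exactly acceptance by the $p$th generalized Rabin pair, so $\rho$ satisfies the GRP condition and the strategy is winning. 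The main obstacle I anticipate is the bookkeeping in the second paragraph: cleanly justifying that once $w_0$ and $\pi$ stabilize the play is effectively reduced, in its most important coordinate, to a single Rabin-pair descent — i.e.~carefully peeling off the recursive $>^{\ell}_{v,\elemB}$ definition to isolate that the top pair $\pi_1$ alone forces the acceptance, rather than needing to reason simultaneously about all $k$ coordinates. This mirrors the Rabin-ranking argument of~\cite{PP}, with the extra $\elemB$-cycling layer (reminiscent of their Streett ranking) being the only genuinely new ingredient.
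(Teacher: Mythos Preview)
Your overall approach---the greedy finite-memory strategy with memory $\B$ and update $\nxt_v$, the rank-descent invariant $r(v_t,\elemB_t)>_{v_t,\elemB_t}r(v_{t+1},\elemB_{t+1})$, and the final $\elemB$-cycling argument---is exactly the paper's. The gap is in how you identify the winning pair.

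You claim that once $w_0$ stabilizes the whole permutation $\pi$ stabilizes, and then that ``the $w_1$-component strictly decreases unless $v\models\I_{p}^{\elemB(p)}$'', making $p=\pi_1$ the winning pair. This overlooks the fourth clause of $>_{v,\elemB}^{1}$ in Definition~\ref{def:good}: with $\pi_1=\pi_1'$, $v\not\models F_{\pi_1}$, and $w_1=w_1'$, the strict descent may be witnessed entirely at a deeper level via $>_{v,\elemB}^{2}$. Hence $w_1$ can stay constant forever without any visit to $\I_{\pi_1}$, and conversely, whenever case~3 at level~1 \emph{does} fire, the tail $\pi_2,\ldots,\pi_k$ is unconstrained at that step and need not stabilize at all. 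The correct statement---and this is what the paper does---is to let $\ell$ be the \emph{smallest} index for which $\succ_{v^m,\elemB^m}^{\ell}$ holds for infinitely many $m$. For $j<\ell$ eventually only the fourth clause applies at level~$j$, so $\pi_1,\ldots,\pi_\ell$ and $w_0,\ldots,w_{\ell-1}$ stabilize and $F_{\pi_j}$ is avoided for all $j\leq\ell$; at level~$\ell$, case~1 is eventually exhausted (as $\pi_\ell$ is non-increasing once the prefix is fixed), and if case~3 occurred only finitely often then $w_\ell$ would be non-increasing and stabilize, leaving only case~4 and contradicting the minimality of~$\ell$. Thus $v\models\I_{\pi_\ell}^{\elemB(\pi_\ell)}$ infinitely often, and your $\elemB$-cycling paragraph then applies verbatim to the pair $\mathit{win}:=\pi_\ell$ rather than to $\pi_1$.
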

\begin{proof}
We construct a strategy with memory $\B$ and memory update $(v,\elemB)\mapsto\nxt_v(\elemB)$. When in vertex $v$ with memory $\elemB$, the strategy chooses a successor $v'$ for which $r(v',\nxt_v(\elemB))=\nxt(v,\elemB)$, i.e.~with the lowest admissible ranking. We prove it is winning from $v$.

Consider an infinite play $v^0v^1\cdots$ conforming to the strategy and $\elemB^0\elemB^1\cdots$ the corresponding memories and $r^0r^1\cdots$ the corresponding ranks $r^i=r(v^i,\elemB^i)=(\pi^i_1\cdots\pi^i_k,w^i_0\cdots w^i_k)$. By definitions~\ref{def:good} and~\ref{def:nxt}, $r^i>_{v^i,\elemB^i}r^{i+1}$ for all $i$.

Let $\ell$ be the smallest number for which $r^m\succ_{v^m,\elemB^m}^\ell r^{m+1}$ for infinitely many $m$. Then for almost all $i$
\begin{itemize}
 \item $r^i$ are the same on the first $\ell$ elements of both components, i.e. on each of $\pi^i_1$ to $\pi^i_\ell$ and $w^i_0$ to $w^i_{\ell-1}$, we denote the repetitive $\pi^i_\ell$ by $\mathit{win}$,
 \item thus also $v^i\not\models F_{\mathit{win}}$, and
 \item $\elemB^i$ are the same on the first $\ell-1$ elements,
\end{itemize}
thus since $[0..n]$ is well founded, $\elemB^j(\mathit{win})$ gets all values from $[1..|\I_{\mathit{win}}|]$ infinitely often and $v^j\models\I_{\mathit{win}}^k$ infinitely often for each $k$ and the $\mathit{win}$th pair is satisfied. \qed
\end{proof}

\subsection{Completeness}\label{ssec:complete}

We use the standard $\mu$-calculus and define an operator $\pred$ of the \emph{controllable predecessor} as follows: $$\pred\varphi=\{u\in V_0\mid\exists (u,v)\in E: v\models\varphi\}\cup \{u\in V_1\mid\forall (u,v)\in E: v\models\varphi\}$$
Further, we define recursively
\begin{align*}
 \mathfrak{GR}(\emptyset,\varphi,W)&= W\\
 \mathfrak{GR}(\mathcal{GR},\varphi,W)&= \bigvee_{i\in[1..k]}\nu Y.\bigwedge_{j\in[1..|\I_i|]} \mu X.\; \mathfrak{GR}\Big(\mathcal{GR}\setminus\{(F_i,\I_i)\},\varphi\wedge\neg F_i,\\
& ~~~~~~~~~~~~~~~~~~~~~~~~~~~~
 W\vee(\varphi\wedge\neg F_i\wedge I_i^j\wedge \pred Y) \vee (\varphi\wedge\neg F\wedge \pred X) \Big)
\end{align*}
Let $\mathfrak{Win}(\mathcal{GR})$ be the set of winning vertices of Player 0 with the winning condition being the set $\mathcal{GR}$ of GRPs. Then a simple adaptation of Claim 9 of~\cite{PP} to the setting with the conjunction yields an alternative characterization of the winning set.

\begin{lemma}
For non-empty $\mathcal{GR}$, the set $\mathfrak{GR}(\mathcal{GR},\varphi,W)$ is the winning region for Player 0 and the winning condition
$$\bigvee_{(F_,\I)\in\mathcal{GR}} \left( (\varphi\wedge\neg F)\U W\vee \G\Big(\varphi\wedge\neg F\wedge\bigwedge_{j\in[1..|\I|]} \F\I^j\Big) \vee \Big(\mathfrak{Win}(\mathcal{GR}\setminus\{(F,\I)\})\wedge \G(\varphi\wedge\neg F)\Big) \right)$$
\end{lemma}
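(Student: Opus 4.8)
The statement to prove is the final Lemma, which characterizes the winning region $\mathfrak{GR}(\mathcal{GR},\varphi,W)$ via the fixpoint expression in terms of an explicit $\omega$-regular winning condition involving until, globally, and the recursively defined winning set $\mathfrak{Win}(\mathcal{GR}\setminus\{(F,\I)\})$. As the excerpt itself suggests, the plan is to adapt Claim 9 of~\cite{PP} to the generalized setting, so I would structure the argument as an induction on the number of generalized Rabin pairs $|\mathcal{GR}|$, proving the two inclusions separately.

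First I would unfold the definition of $\mathfrak{GR}(\mathcal{GR},\varphi,W)$ and fix one index $i$, focusing on the disjunct $\nu Y.\bigwedge_{j\in[1..|\I_i|]}\mu X.\,\mathfrak{GR}(\cdots)$ for the pair $(F_i,\I_i)$. The least fixpoint $\mu X$ captures the ``reachability'' part: from any vertex in $\mu X$, Player~0 can force a visit to a vertex satisfying $\varphi\wedge\neg F_i\wedge I_i^j\wedge\pred Y$ (making progress on the $j$th B\"uchi component of $\I_i$) or else reach the target $W$ or the recursively winning set, all while staying inside $\varphi\wedge\neg F$. The greatest fixpoint $\nu Y$ and the conjunction over $j$ together enforce that this progress can be repeated \emph{for every} $I_i^j$ infinitely often — this is precisely the generalized Rabin requirement that the $i$th coB\"uchi set $F_i$ is eventually avoided while all B\"uchi sets $I_i^j$ are visited infinitely often. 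For the soundness direction (the fixpoint region is contained in the winning region), I would extract from membership in the fixpoint a Player-0 strategy that schedules the $j$-indices cyclically, using the $\nu Y$ regeneration to guarantee each $I_i^j$ is hit infinitely often, thereby realizing the displayed winning condition.

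For completeness (every vertex satisfying the displayed $\omega$-regular condition lies in the fixpoint region), I would invoke the standard signature/unfolding argument for nested fixpoints, assigning each winning vertex an ordinal rank reflecting how many $\mu X$-approximation steps are needed; the conjunction over $j$ is handled by treating the simultaneous greatest fixpoint $\nu Y$ as stabilizing all B\"uchi obligations at once, which is where the inductive hypothesis on $\mathfrak{Win}(\mathcal{GR}\setminus\{(F,\I)\})$ feeds in — a play either eventually commits permanently to one pair (handled by the $\nu$/$\mu$ block) or keeps changing its ``most important'' pair, in which case it falls into the recursively characterized winning set of the smaller condition. The base case $\mathcal{GR}=\emptyset$, giving $\mathfrak{GR}(\emptyset,\varphi,W)=W$, matches the degenerate winning condition trivially.

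The main obstacle I anticipate is the interaction between the conjunction $\bigwedge_j$ inside the greatest fixpoint and the disjunction $\bigvee_i$ over pairs: in the ordinary Rabin case of~\cite{PP} each pair contributes a single B\"uchi set, so the inner structure is a plain $\nu Y.\mu X$ alternation, whereas here the conjunction over $j$ must be shown to correctly encode ``all $\F\I^j$ simultaneously'' without the greatest fixpoint collapsing when one component lags behind. I would need to verify carefully that the $\nu Y$ is shared across all conjuncts (so progress on one $I_i^j$ does not reset the others prematurely) and that the resulting strategy can interleave the obligations — this is exactly the point where the construction borrows from the Streett ranking of~\cite{PP}, and it is the place where the adaptation of Claim~9 is genuinely more than a syntactic substitution.
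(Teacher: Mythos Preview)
Your plan is correct and follows precisely the route the paper takes: the paper's own proof consists of a single sentence stating that the lemma is ``a simple adaptation of Claim 9 of~\cite{PP} to the setting with the conjunction,'' so your proposal to run the induction on $|\mathcal{GR}|$, split into soundness and completeness, and handle the new $\bigwedge_j$ inside the $\nu Y$ by a Streett-style argument is exactly that adaptation spelled out in more detail than the paper itself provides.
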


\noindent As a result, we get directly by \cite{PP} the following symbolic recursive algorithm.

\begin{lemma}
$\mathfrak{Win}(\mathcal{GR})=\mu Z.\;\mathfrak{GR}(\mathcal{GR},\true,\pred Z)$ 
\end{lemma}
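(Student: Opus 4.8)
The plan is to identify $\mathfrak{Win}(\mathcal{GR})$ with the least fixpoint of the monotone operator $G(Z):=\mathfrak{GR}(\mathcal{GR},\true,\pred Z)$ on the powerset lattice $2^V$, using the characterization of the previous lemma as the only substantive ingredient. First I would check that $G$ is monotone: the target argument $W$ occurs only positively in the definition of $\mathfrak{GR}$ (as the disjunct $W$ and, in the characterization, inside $(\varphi\wedge\neg F)\U W$), and $\pred$ is monotone, so $G$ is monotone and $\mu Z.\,G(Z)$ exists by Knaster--Tarski. The goal is then the two inclusions $\mu Z.\,G(Z)\subseteq\mathfrak{Win}(\mathcal{GR})$ and $\mathfrak{Win}(\mathcal{GR})\subseteq\mu Z.\,G(Z)$.

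For the inclusion $\mu Z.\,G(Z)\subseteq\mathfrak{Win}(\mathcal{GR})$ it suffices, since $\mu Z.\,G(Z)$ is the least pre-fixpoint, to show that $\mathfrak{Win}(\mathcal{GR})$ is itself a pre-fixpoint, i.e.\ $G(\mathfrak{Win}(\mathcal{GR}))\subseteq\mathfrak{Win}(\mathcal{GR})$. By the previous lemma, $G(\mathfrak{Win}(\mathcal{GR}))$ is the winning region for the condition whose three disjuncts are reaching $\pred\mathfrak{Win}(\mathcal{GR})$ along a $\neg F$-path, staying forever in the good region $\G(\varphi\wedge\neg F\wedge\bigwedge_j\F I^j)$, or reaching $\mathfrak{Win}(\mathcal{GR}\setminus\{(F,\I)\})$ while remaining in $\G(\neg F)$. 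In each case Player~0 can sustain a genuine win of $\mathcal{GR}$: a play realizing the second disjunct already satisfies the generalized Rabin pair $(F,\I)$; a play realizing the first reaches $\pred\mathfrak{Win}(\mathcal{GR})$, from which Player~0 steps into $\mathfrak{Win}(\mathcal{GR})$ and follows the winning strategy there; and the third is handled analogously after moving into the recursively winning region. Composing these reach-then-win strategies yields $G(\mathfrak{Win}(\mathcal{GR}))\subseteq\mathfrak{Win}(\mathcal{GR})$.

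The reverse inclusion $\mathfrak{Win}(\mathcal{GR})\subseteq\mu Z.\,G(Z)$ is the part I expect to be the main obstacle. I would prove it by induction on $k=|\mathcal{GR}|$, the base case $\mathcal{GR}=\emptyset$ being immediate. For the step, fix a winning vertex $v$ and a winning strategy for Player~0. By the GRP acceptance condition, every conforming play eventually commits to one pair $(F_i,\I_i)$ --- from some point on $F_i$ is never seen and each $I_i^j$ is visited infinitely often --- which is exactly the behaviour captured by the $i$-th disjunct $\nu Y.\bigwedge_{j}\mu X.(\cdots)$: the outer $\nu Y$ records the infinitely-often revisits, the conjunction $\bigwedge_j$ forces each generalized-B\"uchi set $I_i^j$, the inner $\mu X$ realizes the finite $\neg F_i$-reachability of these sets, and the recursive call $\mathfrak{GR}(\mathcal{GR}\setminus\{(F_i,\I_i)\},\cdots)$ absorbs the plays that instead switch to another pair, which lie in the winning region of a smaller condition by the induction hypothesis. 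Unfolding the outer $\mu Z$ then collects precisely those vertices from which Player~0 can force reaching such a self-sustaining region in finitely many controllable steps, placing $v$ in $\mu Z.\,G(Z)$.

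The technical crux is the bookkeeping in this last step: matching the segments of an arbitrary winning play to the approximants of the four nested fixpoints (the outer $\mu Z$ for reaching the sustainable region, the per-pair $\nu Y$ for infinitely-often revisits, the conjunction over $j$, and the inner $\mu X$ for finite reachability within $\neg F_i$). This is, however, exactly the analysis carried out in Claim~9 of~\cite{PP} for Rabin and Streett games; since our acceptance condition differs from the Rabin case only by the conjunction $\bigwedge_{j}$, which the previous lemma has already folded into the characterization of $\mathfrak{GR}$, I expect the argument to transcribe directly, so that the lemma follows by the cited result once the previous lemma is in place.
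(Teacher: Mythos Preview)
Your proposal is correct and follows essentially the same route as the paper. The paper is even terser: it simply states that the lemma ``follows directly by~\cite{PP}'' once the preceding characterization of $\mathfrak{GR}(\mathcal{GR},\varphi,W)$ is established, without spelling out the two inclusions or the monotonicity check. Your sketch is a faithful unpacking of what that citation entails---the pre-fixpoint argument for one inclusion and the induction on $|\mathcal{GR}|$ referring to Claim~9 of~\cite{PP} for the other---so there is no substantive difference in approach.
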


\noindent It remains to show a good ranking on the vertices of $\mathfrak{Win}(\mathcal{GR})$ using the characterization above.

\begin{lemma}
There is a good ranking such that for every vertex $v$, from which Player 0 has a winning strategy, there is a permutation $\elemB\in B$ with $r(v,\elemB)\neq\infty$.
\end{lemma}
\begin{proof}
We show how to define good ranking on $\mathfrak{Win}(\mathcal{GR})$. To this end, we use the characterization above written more explicitly in the following algorithm (where all variables are local):

{\sf
\begin{tabbing}
aaa\=aaa\=aaa\=aaa\=aaa\=aaa\=aaa\=\kill\\
Function mainGR(SetOfPairs) \\
\>LeastFix(Z)\\
\>\> Z := Rabin(SetOfPairs, true, $\pred$Z)\\
\>End -- LeastFix(Z)\\
\>Return Z\\
End\\
\\
Function GR(SetOfPairs, Invariant, AlreadyOk)\\
\>Win := 0\\
\>Foreach $(F,\I)\in$ SetOfPairs\\
\>\> RemainingPairs := SetOfPairs$\setminus\{(F,\I)\}$\\
\>\> GreatestFix(Y) \\
\>\>\> Foreach $j\in\{1..|\I|\}$  \\
\>\>\>\> conjunctionX := true\\
\>\>\>\> LeastFix(X) \\
\>\>\>\>\> NewOk := AlreadyOk $\cup$ (Invariant $\cap$ $\neg F$ $\cap$ $I^j$ $\cap$ $\pred$Y) $\cup$ (Invariant $\cap$ $\neg F$ $\cap$ $\pred$X) \\
\>\>\>\>\> If ($|$RemainingPairs$|=0$) \\
\>\>\>\>\>\>\> X := NewOk \\
\>\>\>\>\> Else \\
\>\>\>\>\>\>\>  X := Rabin(RemainingPairs,  Invariant $\cap$ $\neg F$, NewOk) \\
\>\>\>\>\> End -- If ($|$RemainingPairs$|=0$) \\
\>\>\>\> End -- LeastFix(X)\\
\>\>\>\> Let conjunctionX := conjunctionX $\cap$ X \\
\>\>\>End -- Foreach j \\
\>\>\> Let Y := conjunctionX \\
\>\>End -- GreatestFix(Y) \\
\>\>Let Win := Win $\cup$ Y \\
\>End -- Foreach $(F,\I)$\\
\>Return Win \\
End \\
\end{tabbing}
}
%

Similarly to~\cite{PP}, we monitor the call stack of the procedure. We assgin a counter $i$ to each least fixpoint (Z and all nested X's), starting from zero increasing every time next iteration is done. We consider configurations of the program where all greatest fixpoints are in their last iteration.

Each of the states returned by \textsf{mainGR} gets a rank according to the first time it is discovered by the least fixpoints. For a given configuration, let $p_1\cdots p_k$ be the pairs handled by the nested calls of GR and $i_1\cdots i_k$ the $k$ nested values of $i$'s, i.e.  the numbers of iterations of the nested least fixpoints (considering the last calls of the greatest fixpoints) and $i_0$ the value of the counter for Z, and $j_1\ldots j_k$ the nested values of $j$.
Let $X_{p_1\cdots p_k}^{i_0i_1\cdots i_k}$ 
be the current values of the intersection $X$.

Now for every vertex $v$ in the returned set, and $\elemB\in \B$, we set $r(v,\elemB)$ to be the smallest (w.r.t. $>_{lg}$) element $(p_1\cdots p_k,i_0i_1\cdots i_k)$ of the ranking domain where $v\in X_{p_1\cdots p_k}^{i_0i_1\cdots i_k}$ and  $\elemB(\pi_n)=j_n$ for all $n$.

All other pairs $(v,\elemB)$ get rank $\infty$.

The ranking can now be easily shown to be good following~\cite{PP}, since in order to discover a state, its successors must have been already discovered before and have thus a smaller ranking.
\qed
\end{proof}

\newpage

\section{Figures}\label{ssec:figures} 
We give figures of the game family used for the experimental evaluation in section~\ref{sec:synthesis:experiments}.
 
\centering
\begin{figure}[htbp]
\centering
\begin{minipage}[t]{0.99\linewidth} 
 \tikzset{     
  dim/.style={
           rounded corners,
           minimum height=2em,
           inner sep=2pt,
           text centered,
   },
   p1/.style={
	   draw=black, very thick,
           minimum height=2em,
           inner sep=2pt,
           text centered,
           },
   p0/.style={
           circle,
           draw=black, very thick,
           minimum height=2em,
           inner sep=2pt,
           text centered,
           }
}
\centering
\scalebox{0.75}{
\begin{tikzpicture}[->,>=stealth',scale=1]<2->

 \node[dim, anchor=center] (SINIT) 
 {}; 
 \node[p0, left =of SINIT] (S0) 
 {\hspace{0.5cm}};

 \node[p1, above right =of S0] (S1) 
 {$\qquad$};

 \node[p1, below right =of S0] (S2) 
 {$\qquad$};

 {\path[very thick](S0) edge [loop left] node  {$2^{Ap} \setminus \{\{a\}, \{b\}\}$}  (S0);}

 {\path[very thick](S0) edge [bend left=40]  node [left] {$\{a\}$}  (S1);}
 {\path[very thick](S0) edge [bend right=40] node [left] {$\{b\}$} (S2);}

 {\path[very thick](S1) edge [loop above] node [above] {$\{b\}$}  (S1);}
 {\path[very thick](S2) edge [loop below] node [below]   {$\{c\}$}  (S2);}

 {\path[very thick](S1) edge [bend left=40] node [right] {$\,\,\,2^{Ap} \setminus \{\{b\}\}$}  (S0);}
 {\path[very thick](S2) edge [bend right=40] node [right] {$\,\,\,2^{Ap} \setminus \{\{c\}\}$}  (S0);}

\end{tikzpicture}
}
\end{minipage}
\newline
\begin{minipage}[t]{0.99\linewidth}
\centering
 \tikzset{     
  dim/.style={
           rounded corners,
           minimum height=2em,
           inner sep=2pt,
           text centered,
   },
   p1/.style={
	   draw=black, very thick,
           minimum height=2em,
           inner sep=2pt,
           text centered,
           },
   p0/.style={
           circle,
           draw=black, very thick,
           minimum height=2em,
           inner sep=2pt,
           text centered,
           }
}
\scalebox{0.75}{
\begin{tikzpicture}[->,>=stealth',scale=1]<2->

 \node[dim, anchor=center] (SINIT) 
 {}; 
 \node[p0, left =of SINIT] (S0) 
 {\hspace{0.5cm}};

 \node[p1, above right =of S0] (S1) 
 {$\qquad$};

 \node[p1, below right =of S0] (S2) 
 {$\qquad$};

 \node[p0, right =5cm of S0] (S3) 
 {\hspace{0.5cm}};

 \node[p1, above right =of S3] (S4) 
 {$\qquad$};

 \node[p1, below right =of S3] (S5) 
 {$\qquad$};

 {\path[very thick](S0) edge [loop left] node  {$2^{Ap} \setminus \{\{a\}, \{b\}\}$}  (S0);}

 {\path[very thick](S0) edge [bend left=40]  node [left] {$\{a\}$}  (S1);}
 {\path[very thick](S0) edge [bend right=40] node [left] {$\{b\}$} (S2);}

 {\path[very thick](S1) edge [loop above] node [above] {$\{b\}$}  (S1);}
 {\path[very thick](S2) edge [loop below] node [below]   {$\{c\}$}  (S2);}

 {\path[very thick](S1) edge [bend left=20] node [right] {$\,\,\,2^{Ap} \setminus \{\{b\}\}$}  (S3);}
 {\path[very thick](S2) edge [bend right=20] node [right] {$\,\,\,2^{Ap} \setminus \{\{c\}\}$}  (S3);}

 {\path[very thick](S3) edge [loop left] node  {$2^{Ap} \setminus \{\{a\}, \{b\}\}$}  (S3);}
 {\path[very thick](S3) edge [bend left=40]  node [right] {$\,\{a\}$}  (S4);}
 {\path[very thick](S3) edge [bend right=40] node [right] {$\,\{b\}$} (S5);}

 {\path[very thick](S4) edge [loop above] node [above] {$\{b\}$}  (S4);}
 {\path[very thick](S5) edge [loop below] node [below]   {$\{c\}$}  (S5);}

 {\path[very thick](S4) edge [bend left=40] node [right] {$\,\,\,2^{Ap} \setminus \{\{b\}\}$}  (S3);}
 {\path[very thick](S5) edge [bend right=40] node [right] {$\,\,\,2^{Ap} \setminus \{\{c\}\}$}  (S3);}

\end{tikzpicture}
}
\end{minipage}
\newline
\begin{minipage}[t]{0.99\linewidth}
\centering
 \tikzset{     
  dim/.style={
           rounded corners,
           minimum height=2em,
           inner sep=2pt,
           text centered,
   },
   p1/.style={
	   draw=black, very thick,
           minimum height=2em,
           inner sep=2pt,
           text centered,
           },
   p0/.style={
           circle,
           draw=black, very thick,
           minimum height=2em,
           inner sep=2pt,
           text centered,
           }
}
\scalebox{0.75}{
\begin{tikzpicture}[->,>=stealth',scale=1]<2->

 \node[dim, anchor=center] (SINIT) 
 {}; 
 \node[p0, left =of SINIT] (S0) 
 {\hspace{0.5cm}};

 \node[p1, above right =of S0] (S1) 
 {$\qquad$};

 \node[p1, below right =of S0] (S2) 
 {$\qquad$};

 \node[p0, right =5cm of S0] (S3) 
 {\hspace{0.5cm}};

 \node[p1, above right =of S3] (S4) 
 {$\qquad$};

 \node[p1, below right =of S3] (S5) 
 {$\qquad$};

 \node[p0, right =5cm of S3] (S6) 
 {\hspace{0.5cm}};

 \node[p1, above right =of S6] (S7) 
 {$\qquad$};

 \node[p1, below right =of S6] (S8) 
 {$\qquad$};

 {\path[very thick](S0) edge [loop left] node  {$2^{Ap} \setminus \{\{a\}, \{b\}\}$}  (S0);}
 {\path[very thick](S0) edge [bend left=40]  node [left] {$\{a\}$}  (S1);}
 {\path[very thick](S0) edge [bend right=40] node [left] {$\{b\}$} (S2);}
 {\path[very thick](S1) edge [loop above] node [above] {$\{b\}$}  (S1);}
 {\path[very thick](S2) edge [loop below] node [below]   {$\{c\}$}  (S2);}
 {\path[very thick](S1) edge [bend left=20] node [right] {$\,\,\,2^{Ap} \setminus \{\{b\}\}$}  (S3);}
 {\path[very thick](S2) edge [bend right=20] node [right] {$\,\,\,2^{Ap} \setminus \{\{c\}\}$}  (S3);}

 {\path[very thick](S3) edge [loop left] node  {$2^{Ap} \setminus \{\{a\}, \{b\}\}$}  (S3);}
 {\path[very thick](S3) edge [bend left=40]  node [right] {$\,\{a\}$}  (S4);}
 {\path[very thick](S3) edge [bend right=40] node [right] {$\,\{b\}$} (S5);}
 {\path[very thick](S4) edge [loop above] node [above] {$\{b\}$}  (S4);}
 {\path[very thick](S5) edge [loop below] node [below]   {$\{c\}$}  (S5);}
 {\path[very thick](S4) edge [bend left=40] node [right] {$\,\,\,2^{Ap} \setminus \{\{b\}\}$}  (S6);}
 {\path[very thick](S5) edge [bend right=40] node [right] {$\,\,\,2^{Ap} \setminus \{\{c\}\}$}  (S6);}

 {\path[very thick](S6) edge [loop left] node  {$2^{Ap} \setminus \{\{a\}, \{b\}\}$}  (S6);}
 {\path[very thick](S6) edge [bend left=40]  node [right] {$\,\{a\}$}  (S7);}
 {\path[very thick](S6) edge [bend right=40] node [right] {$\,\{b\}$} (S8);}
 {\path[very thick](S7) edge [loop above] node [above] {$\{b\}$}  (S7);}
 {\path[very thick](S8) edge [loop below] node [below]   {$\{c\}$}  (S8);}
 {\path[very thick](S7) edge [bend left=40] node [right] {$\,\,\,2^{Ap} \setminus \{\{b\}\}$}  (S6);}
 {\path[very thick](S8) edge [bend right=40] node [right] {$\,\,\,2^{Ap} \setminus \{\{c\}\}$}  (S6);}
\end{tikzpicture}
}
\end{minipage}
\caption[Games for Experiments]{Games used for the experiments in section~\ref{sec:synthesis:experiments},
with $Ap = \{a,b,c\}$.}
\label{fig:games:experiments}
\end{figure}
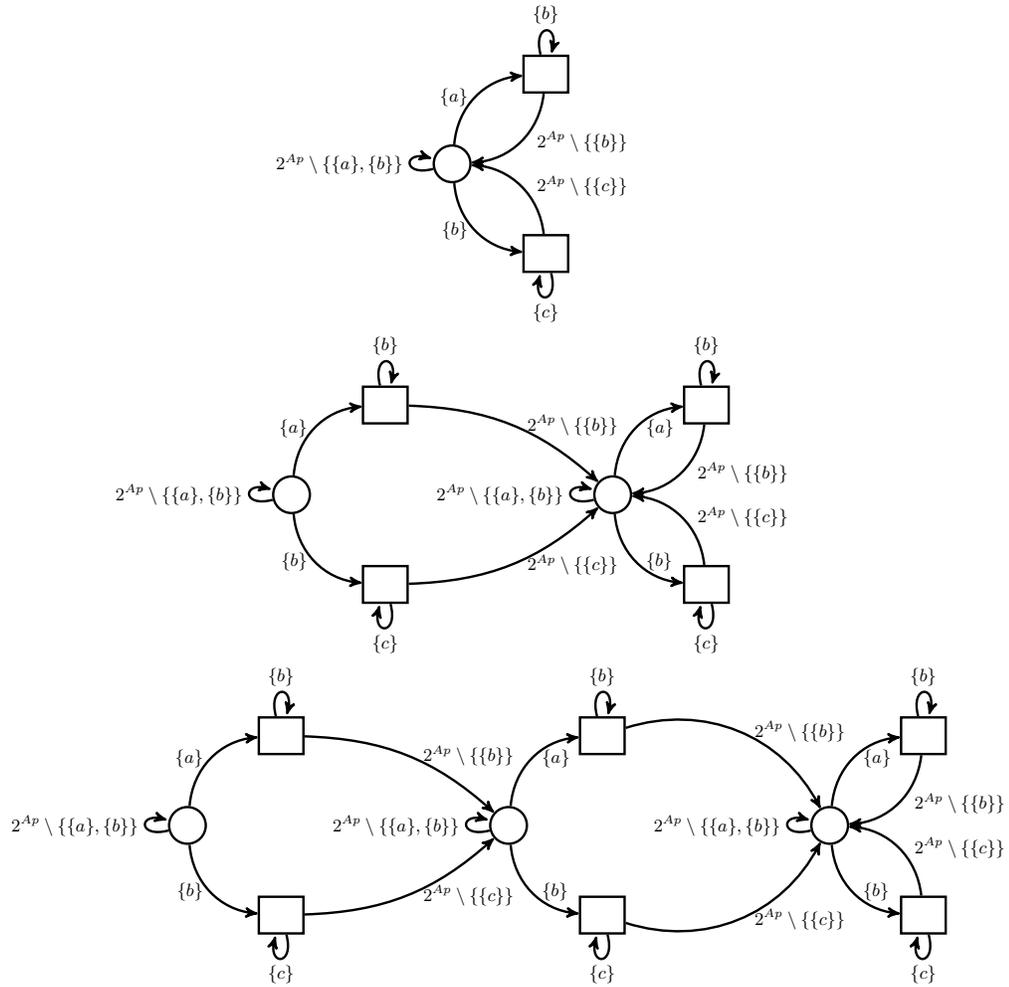

\end{document}